\newcommand{\nop}[1]{}
\title{(Total) Vector Domination for Graphs \\ with Bounded Branchwidth}
\author{Toshimasa Ishii~\inst{1} \and Hirotaka Ono~\inst{2}  \and 
Yushi Uno~\inst{3}}
\institute{Graduate School of Economics and Business Administration, 
Hokkaido University, Sapporo 060-0809, Japan
\and 
Department of Economic Engineering, Faculty of Economics, 
Kyushu University, Fukuoka 812-8581, Japan 
\and 
Department of Mathematics and Information Sciences, Graduate School of
Science, Osaka Prefecture University, Sakai 599-8531, Japan} 
\begin{document}
\maketitle

\begin{abstract}
Given a graph $G=(V,E)$ of order $n$ and an $n$-dimensional non-negative  
vector $d=(d(1),d(2),\ldots,d(n))$, called demand vector, the vector 
domination (resp., total vector domination) is the problem of finding a 
minimum $S\subseteq V$ such that every vertex $v$ in $V\setminus S$ 
(resp., in $V$) has at least $d(v)$ neighbors in $S$. The (total)
 vector domination is a generalization of many dominating set type
 problems, e.g., the dominating set problem, the $k$-tuple dominating
 set problem (this $k$ is different from the solution size),  
 and so on, and its approximability and inapproximability have been
 studied under this general framework. In this paper, we show that a
(total) vector domination of graphs with bounded branchwidth can be
solved in polynomial time. This implies that the problem is 
polynomially solvable also for graphs with bounded treewidth. 
Consequently, the  (total) vector domination problem for 
a planar graph is subexponential fixed-parameter tractable with respect
 to $k$, where $k$ is the size of solution.  
\end{abstract}

\section{Introduction}
Given a graph $G=(V,E)$ of order $n$ and an $n$-dimensional non-negative
vector $d=(d(1),d(2),\ldots,d(n))$, called {\em demand vector}, the {\em
vector domination} (resp., {\em total vector domination}) is the problem of finding a minimum
$S\subseteq V$ such that every vertex $v$ in $V\setminus S$ (resp., in
$V$) has at least $d(v)$ neighbors in $S$. These problems were introduced by
\cite{HPV99}, and they contain many existing problems, such as the
minimum dominating set and the $k$-tuple dominating set 
problem  (this $k$ is different from the solution size)~\cite{harary2000double,haynes1998domination}, and so 
on. Indeed, by setting $d=(1,\ldots,1)$, the vector domination becomes
the minimum dominating set forms, and by setting $d=(k,\ldots,k)$, the
total vector dominating set becomes the $k$-tuple dominating set. If in
the definition of total vector domination, we replace open neighborhoods
with closed ones, we get the {\em multiple domination}. In this paper, 
we sometimes refer to these problems just as {\em domination problems}. 
Table 1 of \cite{DAM2013} 
summarizes how related problems are represented in the scheme of
domination problems. Many variants of the basic
concepts of domination and their applications have appeared in
\cite{haynes1998domination,haynes1998fundamentals}.  

Since the vector or multiple domination includes the setting of the 
ordinary dominating set problem, it is obviously NP-hard, and further it
is NP-hard to approximate within $(c\log n)$-factor, where $c$ is a 
positive constant, e.g.,
$0.2267$~\cite{alon2006algorithmic,lund1994hardness}. 
As for the approximability, since the domination problems are special
cases of a set-cover type integer problem, it is known that the
polynomial-time greedy algorithm achieves an $O(\log n)$-approximation
factor~\cite{dobson1982worst}; it is already optimal in terms of order.  
We can see further analyses of the approximability and inapproximability
in~\cite{FCT2011,DAM2013}.  

In this paper, we focus on another aspect of designing algorithms for
domination problems, that is, the polynomial-time solvability of 
the domination problems for graphs of bounded treewidth or branchwidth. 
In \cite{betzler2012bounded}, it is shown that the vector domination
problem is $W[1]$-hard with respect to treewidth. This result and 
Courcelle's meta-theorem about MSOL~\cite{courcelle1990monadic} 
imply that the vector domination is unlikely expressible in MSOL; 
it is not obvious to obtain a polynomial time algorithm.  

In this paper, we present a polynomial-time algorithm for the domination
problems of graphs with bounded branchwidth. The branchwidth is a
measure of the ``global connectivity'' of a graph, and is known to be a
counterpart of treewidth. It is known that 
\[
 \max\{bw(G),2\} \le tw(G)+1 \le \max\{3bw(G)/2,2\}, 
\]
where $bw(G)$ and $tw(G)$ denote the branchwidth and treewidth of
graph $G$, respectively~\cite{robertson1991graph}. Due to the linear
relation of these two 
measures, polynomial-time solvability of a problem 
for graphs with bounded treewidth implies 
polynomial-time solvability of a problem 
for graphs with bounded branchwidth, and vice versa. 
Hence, our results imply that 
the domination problems (i.e., vector domination, total vector
domination and multiple domination) can be solved in polynomial time for
graphs with bounded treewidth; the 
polynomial-time solvability for all the problems (except the dominating
set problem) in Table 1 of \cite{DAM2013} is newly shown.  
Also, they answer the question by \cite{FCT2011,DAM2013} about  
the complexity status of the domination problems of graphs with
bounded treewidth. 

Furthermore, by using the polynomial-time algorithms for graphs of
bounded  treewidth, 
we can show that these problems for a planar graph are subexponential
fixed-parameter tractable with respect to the size of the solution $k$,
that is, there is an algorithm whose running time is 
$2^{O(\sqrt{k}\log k)}n^{O(1)}$. 
To our best knowledge, these are the first 
fixed-parameter algorithms for the total vector domination and multiple domination, 
whereas the vector domination for planar graphs has been shown to be 
FPT~\cite{raman2008parameterized}. For the latter case, our algorithm
greatly improves the running time. 


Note that the polynomial-time solvability of the vector domination 
problem for graphs of bounded treewidth has been independently shown
very recently~\cite{cicalese2013latency2}. They considered a further
generalization of the vector domination problem, and 
gave a polynomial-time algorithm for graphs of bounded
clique-width. Since $cw(G)\le 2^{tw(G)+1}+1$ holds where $cw(G)$ denotes
the clique-width of graph $G$ (\cite{courcelle2000upper}), their 
polynomial-time algorithm implies the polynomial-time solvability of the
vector domination problem for graphs of bounded treewidth and 
bounded branchwidth. 

\subsection{Related Work}\label{related:subsec}
The dominating set problem itself is one of the most fundamental
graph optimization problems, and it has been intensively and extensively
studied from many points of view. In the sense that the vector or
multiple domination contains the setting of not only the ordinary dominating set
problem but also many variants, there are an enormous number of related
studies. Here we pick some representatives up. 

As a research of the domination problems from the viewpoint of the
algorithm design, Cicalese, Milanic and Vaccaro gave detailed analyses
of the approximability and
inapproximability~\cite{FCT2011,DAM2013}. They also provided
some exact polynomial-time algorithms for special classes of 
graphs, such as complete graphs, trees, $P_4$-free graphs, and threshold
graphs. 

For graphs with bounded treewidth (or branchwidth), the ordinary
domination problems can be solved in polynomial time.  
As for the fixed-parameter tractability, it is known that even the 
ordinary dominating set problem is W[2]-complete 
with respect to solution size $k$; it is unlikely to be 
fixed-parameter tractable~\cite{downey1992fixed}. 
In contrast, it can be 
solved in $O(2^{15.13\sqrt{k}}+n^3)$ time for planar graphs, that
is, it is subexponential fixed-parameter tractable~\cite{fomin2006dominating}. 
The subexponent part comes from the inequality $bw(G)\le 12\sqrt{k}+9$,
where $k$ is the size of a dominating set of $G$. Behind the inequality, 
there is a unified property of parameters, called {\em
bidimensionality}~\cite{demaine2008bidimensionality}. Namely, the
subexponential fixed-parameter algorithm of the dominating set for
planar graphs (more precisely, $H$-minor-free
graphs~\cite{demaine2005subexponential}) is based on the bidimensionality.  

A maximization version of the ordinary dominating set is also
considered. {\em Partial Dominating Set} is the problem of maximizing
the number of vertices to be dominated by using a given number $k$ of
vertices. In \cite{amini2011implicit}, it was shown that partial
dominating set problem is FPT with respect to $k$ for $H$-minor-free
graphs. Later, \cite{fomin2011subexponential} gives a subexponential FPT
with respect to $k$ for apex-minor-free graphs, also a super class of 
planar graphs. Although partial dominating set is an example of problems
to which the bidimensionality theory cannot be applied, they develop a
technique to reduce an input graph so that its treewidth becomes 
$O(\sqrt{k})$. 

For the vector domination, 
a polynomial-time algorithm for graphs of bounded treewidth has 
been proposed very recently~\cite{cicalese2013latency2}, as mentioned
before. In \cite{raman2008parameterized}, it is shown that the vector
domination for 
$\rho$-degenerated graphs can be solved  in $k^{O(\rho k^2)}n^{O(1)}$
time, if $d(v) >0$ holds 
for $\forall v \in V$ (positive constraint). Since any planar graph is
$5$-degenerated, the vector domination for planar graphs is
fixed-parameter tractable  with respect to solution size, 
under the positive constraint. Furthermore, the case where $d(v)$ could
be $0$ for some $v$ can be  easily reduced to the positive case by using
the transformation discussed in \cite{betzler2012bounded}, with increasing the degeneracy
only $1$. It follows that the vector domination for planar graphs is 
FPT with respect to solution size $k$. However, for the total vector
domination and multiple domination, neither polynomial time
algorithm for graphs of bounded treewidth nor fixed-parameter algorithm for 
planar graphs has been known. 

Other than these, several generalized versions of the dominating 
set problem are also studied. $(k, r)$-center problem is the one that asks the
existence of set $S$ of $k$ vertices satisfying that for every vertex
$v\in V$ there exists a vertex $u\in S$ such that the distance between
$u$ and $v$ is at most $r$; $(k,1)$-center corresponds to the ordinary
dominating set. The $(k, r)$-center for planar graphs is shown to be
fixed-parameter tractable with respect to $k$ and
$r$~\cite{demaine2005fixed}. For $\sigma,\rho\subseteq
\{0,1,2,\ldots\}$ and a positive integer $k$, $\exists
[\sigma,\rho]$-dominating set is the problem 
that asks the existence of set $S$ of $k$ vertices  satisfying that
$|N(v)\cap S|\in \sigma$ holds for $\forall v\in S$ and $|N(v)\cap S|\in
\rho$  for $\forall v\in V\setminus S$, where $N(v)$ denotes the open
neighborhood of $v$. If $\sigma=\{0,1,\ldots \}$ and $\rho=\{1,2,\ldots
\}$, $\exists [\sigma,\rho]$-dominating set is the ordinary dominating
set problem, and if $\sigma=\{0 \}$ and $\rho=\{0,1,2,\ldots \}$, it is
the independent set. In \cite{chapelle2010parameterized}, the
parameterized complexity of $\exists [\sigma,\rho]$-dominating set with
respect to treewidth is also considered.

\subsection{Our Results}
Our results are summarized as follows: 
\begin{itemize}
 \item We present a polynomial-time algorithm for the vector domination
       of graph $G=(V,E)$ with bounded branchwidth. The running time is
       roughly $O(n^{6bw(G)+2})$. 
 \item We present polynomial-time algorithms for the total vector
       domination and multiple domination of graph $G$ with bounded
       branchwidth. The running time is roughly $O(2^{9bw(G)/2}$
       $n^{6bw(G)+2})$.  
 \item 
       Let $G$ be a planar graph. 
       Then, we can check in
       $O(n^4+\min\{k,d^*\}^{40\sqrt{k}+34}n)$ time
       whether $G$ has a vector dominating set with cardinality at most $k$
       or not, where $d^*=\max\{d(v)\mid v \in V\}$.
 \item 
       Let $G$ be a planar graph. 
       Then, we can check in
       $O(n^4+2^{30\sqrt{k}+51/2}\min\{k,d^*\}^{40\sqrt{k}+34}n)$ time
       whether $G$ has a total vector dominating set and a multiple dominating
       set  with cardinality at most $k$ or not. 
\end{itemize}
It should be noted that it is actually possible to design directly 
polynomial time algorithms for graphs with bounded treewidth, 
but they are slower than the ones for graphs with bounded branchwidth; 
this is the reason why we adopt the branchwidth instead of the
treewidth. 

As far as the authors know, the second and fourth results give the first
polynomial time algorithms and the first fixed-parameter algorithm for
the total vector 
domination and multiple domination of graphs with bounded branchwidth
(or treewidth) and planar graphs, respectively. As for the vector domination,  
we give an $O(n^{6bw(G)+2})$-time algorithm, 
whose running time is $O(n^{6(tw(G)+1)+2})$ in terms of the treewidth, 
whereas the recent paper~\cite{cicalese2013latency2} gives 
an $O(cw(G)|\sigma|(n+1)^{5cw(G)})$-time algorithm, where $|\sigma|$ 
is the encoding length of $k$-expression used in the algorithm, 
and is bounded by a polynomial in the input size for fixed $k$. 
Since $cw(G)\le 2^{tw(G)+1}+1$ holds, this is an
$O(2^{tw(G)+1}|\sigma|(n+1)^{5\cdot 2^{tw(G)+1}})$-time algorithm. 

Also, the third result shows that the vector domination of planar graphs
is subexponential FPT with respect to $k$, and it greatly improves the
running time of existing $k^{O(k^2)}n^{O(1)}$-time algorithm
(\cite{raman2008parameterized}). 
It was shown in \cite{cai2003subexponential} that for the ordinary
dominating set problem 
(equivalently,  the
vector domination (or multiple domination) with $d=(1,1,\ldots,1)$) in
planar graphs,
there is  no $2^{o(\sqrt{k})}n^{O(1)}$-time algorithm 
unless the Exponential Time Hypothesis 
(i.e., 
the assumption that there is no $2^{o(n)}$-time algorithm for
$n$-variable 3SAT \cite{impagliazzo2001ETH})
fails. 
Hence, in this sense, our algorithm in third result (or the fourth
results for the multiple domination) is optimal if $d^*$
is a constant.

The third and fourth results give
subexponential fixed-parameter algorithms of the domination problems 
for planar graphs. It should be noted that the domination problems 
themselves do not have the bidimensionality, mentioned in the previous
subsection, due to the existence of the vertices with demand $0$. 
Instead, by reducing irrelevant vertices, we obtain a similar inequality
about the branchwidth and the solution size of the domination problems,
which leads to the subexponential fixed-parameter algorithms.  

\medskip 

The remainder of the paper is organized as follows. In Section 2, we
introduce some basic notations and then explain the
branch decomposition. Section 3 is the main part of the paper, and
presents our dynamic programming based algorithms for the considered
problems. Section 4 explains how we extend the algorithms of Section 3
to fixed-parameter algorithms for planar graphs.


\section{Preliminaries}

A graph $G$ is an ordered pair of its vertex set $V(G)$ and edge set $E(G)$ 
and is denoted by $G=(V(G),E(G))$. Let $n=|V(G)|$ and $m=|E(G)|$. 
We assume throughout this paper that all graphs are undirected, 
and simple, 
unless otherwise stated. 
Therefore, an edge $e\in E(G)$ is an unordered pair of vertices $u$ and $v$, 
and we often denote it by 
$e=(u,v)$.
Two vertices $u$ and $v$ are {\em adjacent} if 
$(u,v)\in E(G)$.
For a graph $G$, the ({\em open}) {\em neighborhood} of a vertex $v\in V(G)$ 
is the set $N_G(v)=\{u\in V(G)\mid (u,v)\in E(G)\}$, 
and the {\em closed neighborhood} of $v$ 
is the set $N_G[v]=N_G(v)\cup\{v\}$. 


For a graph $G=(V,E)$, let $d=(d(v) \mid v \in V)$ be
an $n$-dimensional non-negative
vector. 
Then, we call a set $S \subseteq V$ of vertices
a {\em $d$-vector dominating set} (resp., {\em $d$-total vector dominating set}) 
if $|N_G(v) \cap S|\geq d(v)$ holds for
 every vertex $v \in V-S$ (resp., $v \in V$).
We call a set $S \subseteq V$ of vertices
a {\em $d$-multiple dominating set} 
if $|N_G[v] \cap S|\geq d(v)$ holds for
 every vertex $v \in V$.
We may drop $d$ in these notations if there are no confusions.

\subsection{Branch decomposition}

A {\em branch decomposition} of a graph $G=(V,E)$  is
 defined as a pair $(T=(V_T,E_T), \tau)$
 such that
(a) $T$ is a tree with $|E|$ leaves
in which every non-leaf node has degree 3, and
(b) $\tau$ is a bijection from $E$ to the set of leaves of $T$.
Throughout the paper, we shall use {\em node} to denote an element in $V_T$
for distinguishing it from an element in $V$.

For an edge $f$  in $T$, 
let $T_f$ and $T-T_f$ be two trees obtained from $T$ by removing $f$, and
 $E_f$ and $E-E_f$ be  two sets of edges in $E$ such that $e \in E_f$
if and only if
$\tau(e)$ is included in $T_f$.
The {\em order function} $w: E(T) \to 2^{V}$ is defined as follows:
for an edge $f$ in $T$, a vertex $v \in V$ belongs to $w(f)$ if and only if 
there exist an edge in $E_f$ and an edge in $E-E_f$
which  are both incident to $v$.
The {\em width} of a branch decomposition $(T, \tau)$ is
$\max\{|w(f)| \mid f \in E_T\}$,
and the {\em branchwidth} of $G$, denoted by $bw(G)$,
is the minimum width over all
branch decompositions of $G$.

In general, computing the branchwidth of a given graph 
is NP-hard \cite{seymour1994call}.
On the other hand, 
Bodlaender and Thilikos~\cite{bodlaender1997constructive}
 gave a linear time algorithm which
checks 
 in linear time 
whether the branchwidth of a given
graph
is at most $k$ or not, and if so, 
outputs a branch decomposition of minimum width, for any fixed $k$.
Also, as shown in the following lemma, it is known that for planar graphs, 
it can be done in polynomial time for any given $k$,
where a graph is called {\em planar}
 if it can be drawn
in the plane without generating a crossing by two edges.

\begin{lemma}\label{branch:lem}{\rm (}\cite{seymour1994call}{\rm )}
Let $G$ be a planar graph.
Then, it can be checked in $O(n^2)$ time  whether 
$bw(G)\leq k$ or not for a given integer $k$. 
Also, we can construct a branch decomposition of $G$ 
with width $bw(G)$ in $O(n^4)$ time. \qed
\end{lemma}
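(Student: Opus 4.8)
The plan is to follow the \emph{ratcatcher} approach of Seymour and Thomas, whose conceptual engine is the min-max duality between branchwidth and tangles: a graph $G$ satisfies $bw(G)\le k$ if and only if $G$ admits no tangle of order $k+1$. This duality is what makes the decision problem tractable, since it supplies a succinct certificate on both sides---a branch decomposition of width $\le k$ as a witness for ``yes'', and a tangle of order $k+1$ as an obstruction for ``no''. First I would set up this duality for general graphs, reducing the question ``$bw(G)\le k?$'' to detecting a tangle of order $k+1$.

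The crux is to exploit planarity. Fixing a plane embedding of $G$, I would reinterpret branch decompositions of small width as systems of closed curves (\emph{nooses}) in the plane that meet $G$ only in vertices and cross few of them; equivalently, I would pass to the \emph{medial} (or radial) graph, where the relevant separators become short nooses and branchwidth is captured by a carving-type parameter. The key step is then to encode the search for a large tangle as a pursuit--evasion game on this auxiliary planar structure, the ratcatcher game, and to prove that the least width at which the ratcatcher wins equals $bw(G)$ (up to the standard low-order normalization). The value of this game can be computed by a monotone fixed-point iteration over a state space indexed by pairs consisting of an edge of the medial graph together with an ``antipodal'' position; by planarity this space has size $O(n)$ and the iteration stabilizes after $O(n)$ rounds, so a careful implementation yields the $O(n^2)$ decision bound. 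For the constructive part, I would use the decision procedure as an oracle inside a self-reduction: iteratively commit to a separating noose, recurse on the two sides, and splice the resulting pieces into a branch decomposition of width $bw(G)$. Each commitment is guided by $O(n)$ calls to the $O(n^2)$-time decider, which accounts for the extra factor and gives the $O(n^4)$ construction bound.

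The main obstacle, and the technical heart of the argument, is proving that the ratcatcher game value coincides \emph{exactly} with the branchwidth rather than merely bounding it: this requires translating an optimal escape strategy of the rat into a genuine tangle of the right order, and conversely turning a winning ratcatcher strategy into a branch decomposition of width $\le k$, both through the planar-duality correspondence between nooses and separations. Getting the game to run in quadratic time---as opposed to the naive polynomial arising from iterating over all separations---is the second delicate point, and relies specifically on the antipodality structure available only in the plane.
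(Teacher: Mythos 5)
Your proposal is essentially a correct reconstruction of the Seymour--Thomas ratcatcher argument (medial-graph/noose translation, game characterization with antipodality giving the $O(n^2)$ decision, and decision-oracle-driven self-reduction giving the $O(n^4)$ construction), which is exactly the result the paper invokes here: the paper gives no proof of its own but cites \cite{seymour1994call}, whose proof follows the route you describe.
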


Here, we introduce  the following basic properties about branch decompositions,
which will be utilized in the subsequent sections.

\begin{lemma}\label{branch1:lem}
Let $(T,\tau)$ be a branch decomposition  of $G$.

\noindent
$(i)$ For a tree $T$, let $x$ be a non-leaf node and
 $f_i=(x,x_i)$, $i=1,2,3,$ be an edge incident to $x$
$($note that the degree of $x$ is three$)$.
Then,  $w(f_i)-w(f_j)-w(f_k)=\emptyset$ 
for every $\{i,j,k\}=\{1,2,3\}$.
Hence, $w(f_i) \subseteq w(f_j) \cup w(f_k)$.


\noindent 
$(ii)$ Let $f$ be an edge of $T$, 
$V_1$ be the set of all end-vertices of edges in $E_f$,
and
 $V_2$ be the set of all end-vertices of edges in $E-E_f$.
Then, $(V_1-w(f))\cap (V_2-w(f))=\emptyset$ holds.
Also, 
there is no edge in $E$ connecting a vertex in $V_1-w(f)$
and a vertex in  $V_2-w(f)$.
\end{lemma}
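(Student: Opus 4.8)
The plan is to prove both parts directly from the definition of the order function $w$, by unwinding what it means for a vertex of $G$ to be ``split'' by an edge of $T$. For part $(i)$, I would first note that deleting the degree-$3$ node $x$ breaks $T$ into three subtrees, one hanging off each of $x_1,x_2,x_3$; since $x$ is a non-leaf it is not the image of any graph edge under $\tau$, so pulling the leaf sets of these subtrees back through $\tau$ partitions $E$ into three disjoint blocks $P_1,P_2,P_3$, where $P_i$ collects the edges whose $\tau$-image lies in the subtree containing $x_i$. Fixing the convention that $T_{f_i}$ is the component containing $x_i$, removing $f_i$ then gives $E_{f_i}=P_i$ and $E-E_{f_i}=P_j\cup P_k$, so by definition $w(f_i)$ is exactly the set of vertices incident both to an edge of $P_i$ and to an edge of $P_j\cup P_k$. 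Now I take any $v\in w(f_i)$: since $v$ touches $P_j\cup P_k$, it touches $P_j$ or it touches $P_k$. If it touches $P_j$, then because $v$ also touches $P_i$, and because $E_{f_j}=P_j$ while $E-E_{f_j}=P_i\cup P_k$, the vertex $v$ is incident to an edge on each side of $f_j$, i.e. $v\in w(f_j)$; symmetrically, touching $P_k$ gives $v\in w(f_k)$. Hence every $v\in w(f_i)$ lies in $w(f_j)\cup w(f_k)$, which is precisely $w(f_i)-w(f_j)-w(f_k)=\emptyset$, equivalently $w(f_i)\subseteq w(f_j)\cup w(f_k)$.

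For part $(ii)$, both claims reduce to the single observation that $v\in w(f)$ holds exactly when $v$ is incident to an edge of $E_f$ and to an edge of $E-E_f$. For the first claim I would argue by contradiction: if some $v$ lay in $(V_1-w(f))\cap(V_2-w(f))$, then $v\in V_1$ makes $v$ incident to an edge of $E_f$ and $v\in V_2$ makes it incident to an edge of $E-E_f$, forcing $v\in w(f)$ and contradicting $v\notin w(f)$. For the second claim, suppose an edge $e=(u,v)\in E$ joined $u\in V_1-w(f)$ to $v\in V_2-w(f)$. The leaf $\tau(e)$ lies on one side of $f$, so $e\in E_f$ or $e\in E-E_f$. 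In the first case $v$ (an endpoint of $e$) is incident to an edge of $E_f$, while $v\in V_2$ makes it incident to an edge of $E-E_f$, so $v\in w(f)$, contradicting $v\in V_2-w(f)$; the case $e\in E-E_f$ is symmetric, yielding $u\in w(f)$ and contradicting $u\in V_1-w(f)$.

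I do not expect a serious technical obstacle, since the arguments are all immediate consequences of the definition of $w$. The only point requiring care is the bookkeeping in part $(i)$: one must verify that deleting $x$ genuinely induces a clean three-way partition $E=P_1\cup P_2\cup P_3$ and that, under the chosen labeling of which component is $T_{f_i}$, the sides $E_{f_i}$ and $E-E_{f_i}$ coincide with $P_i$ and $P_j\cup P_k$. Once this identification is nailed down, the rest follows by directly reading off the definition of the order function.
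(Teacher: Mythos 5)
Your proof is correct and takes essentially the same route as the paper: part $(i)$ via the three-way partition of $E$ induced by the degree-$3$ node (which the paper encodes as a ``without loss of generality'' disjointness assumption on $E_{f_1},E_{f_2},E_{f_3}$) and a case split on which block the second edge incident to $v$ lies in, and part $(ii)$ by the same direct contradictions from the definition of $w(f)$. The only cosmetic difference is that for the last claim of $(ii)$ the paper invokes the already-proved emptiness of $(V_1-w(f))\cap(V_2-w(f))$, whereas you re-derive $v\in w(f)$ directly; both are immediate.
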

\begin{proof}
(i) Without loss of generality, 
assume that $E_{f_1}\cap E_{f_2}=\emptyset$,
$E_{f_2}\cap E_{f_3}=\emptyset$, 
$E_{f_3}\cap 
E_{f_1}=\emptyset$, and
 $E_{f_1} \cup E_{f_2} \cup E_{f_3}=E$.
 Let $v \in w(f_1)$ be a vertex. 
From the definition of $w(f_1)$, there exist two edges
$e \in E_{f_1}$ and $e' \in E-E_{f_1}$ such that
both of $e$ and $e'$ are incident to $v$.
If $e' \in E_{f_2}$ (resp., $e' \in E_{f_3}$), then $v \in w(f_2)$ 
(resp., $v \in w(f_3)$) also holds.
Thus, we can observe that there is no vertex in 
 $w(f_i)-w(f_j)-w(f_k)$ 
for every $\{i,j,k\}=\{1,2,3\}$.


(ii) Assume by contradiction that there exists a vertex 
$v \in (V_1-w(f)) \cap (V_2-w(f))$.
From definition of $V_1$ and $V_2$, then 
there exists an edge $e_1 \in E_f$ and an edge $e_2 \in E-E_f$
such that both of $e_1$ and $e_2$ are incident to $v$.
From the existence of $e_1$ and $e_2$ and the definition of $w(f)$,
it follows that  $w(f)$ also contains $v$,
which contradicts $v \notin w(f)$.

Assume by contradiction that there exists an edge $e=(u_1,u_2) \in E$ such that
$u_1 \in V_1-w(f)$ and   $u_2 \in V_2-w(f)$.
If we assume that $e \in E_f$ without loss of generality,
then   $u_2 \in V_1-w(f)$ also holds,
which contradicts $ (V_1-w(f)) \cap (V_2-w(f))=\emptyset$.
\qed
\end{proof}

\section{Domination problems in graphs of bounded branchwidth}

In this section, we propose  dynamic programming algorithms
for the vector domination problem,
the total vector domination problem, and the multiple domination problem, 
by utilizing a branch decomposition of a given graph.
The techniques  are based on the one developed by
Fomin and Thilikos
for solving the dominating set 
problem with bounded branchwidth \cite{fomin2006dominating}. 
Throughout this section, for a given graph $G=(V,E)$,
the demand of each vertex $v \in V$ is denoted by $d(v)$,
and 
let $d^*=\max\{d(v)\mid v \in V\}$.

\subsection{Vector domination}\label{algo1:subsec}


In this subsection, we consider the vector domination problem,
and
show the following theorem.

\begin{theorem}\label{algo1:th}
If  a branch decomposition of $G$ with width $b$ is given, 
a minimum vector dominating set in $G$  can be found in 
  $O((d^*+2)^{b}\{(d^*+1)^2+1\}^{b/2}m)$ time. 
\qed
\end{theorem}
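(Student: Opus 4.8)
The plan is to build a bottom‑up dynamic program over the given branch decomposition $(T,\tau)$, following the scheme of Fomin and Thilikos~\cite{fomin2006dominating}. First I would root $T$ by subdividing an arbitrary edge and treating the new node as the root, so that every internal node $x$ has two child edges $f_1,f_2$ and one parent edge $f_3$ with $E_{f_3}=E_{f_1}\cup E_{f_2}$. For a tree edge $f$ let $G_f$ be the subgraph formed by the edges in $E_f$; by Lemma~\ref{branch1:lem}(ii), $w(f)$ is exactly the vertex boundary separating $G_f$ from the rest of $G$. The DP table $A_f$ is indexed by a \emph{coloring} $c$ of $w(f)$ that assigns to each boundary vertex $v$ either the symbol ``$\in S$'' or a \emph{residual demand} $r\in\{0,1,\dots,d(v)\}$; the latter records that $v\notin S$ and that $d(v)-r$ of its neighbors inside $G_f$ have already been selected. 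I would define $A_f(c)$ as the minimum number of solution vertices lying in $G_f$ that realize $c$ on $w(f)$ and that already satisfy the demand of \emph{every} vertex internal to $G_f$ (neither in $w(f)$ nor in $S$). Since $w(f)$ separates $G_f$, such internal demands can never be helped from outside, so this invariant is the correct one.

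The base case handles a leaf edge $f$ with $E_f=\{(u,v)\}$ and $w(f)=\{u,v\}$: for each of the four $S$‑membership patterns I set the residual demands induced by the single edge and let $A_f(c)=|S\cap\{u,v\}|$. For the merge at an internal node I would use Lemma~\ref{branch1:lem}(i) to partition the boundary vertices into $A=w(f_1)\cap w(f_2)\cap w(f_3)$, $B=(w(f_1)\cap w(f_2))\setminus w(f_3)$ (vertices that become internal), $C=w(f_1)\setminus w(f_2)$ and $D=w(f_2)\setminus w(f_1)$, noting that $C,D\subseteq w(f_3)$ and that no vertex lies in exactly one of the three boundaries. A pair $(c_1,c_2)$ is \emph{compatible} when they agree on $S$‑membership throughout $A\cup B$. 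The crucial point is that, since $G$ is simple, each neighbor of $v$ is reached through an edge of exactly one of $E_{f_1},E_{f_2}$, so the counts of already‑chosen neighbors simply add, and the merged residual demand of $v\in A$ is $\max\{0,\,c_1(v)+c_2(v)-d(v)\}$. For $v\in B$ I would require this quantity to be $0$, since its demand is now permanently fixed, and I would set $A_{f_3}(c_3)$ to the minimum of $A_{f_1}(c_1)+A_{f_2}(c_2)-|\{v\in A\cup B: v\in S\}|$ over all compatible pairs projecting to $c_3$, the last term removing the solution vertices shared by both sides (exactly the $S$‑vertices of $w(f_1)\cap w(f_2)$). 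Correctness then follows by induction up the tree: Lemma~\ref{branch1:lem}(ii) guarantees that the status of every internal vertex is decided within its own subtree, and at the root edge, where $w=\emptyset$, the single entry $A(\emptyset)$ equals the size of a minimum vector dominating set.

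The main obstacle is the running‑time bound. There are $O(m)$ tree edges, and each table has at most $(d^*+2)^{|w(f)|}\le(d^*+2)^b$ entries, since a vertex has at most $d^*+2$ colors. Computing a node's table by ranging over compatible pairs costs $P^{|A|+|B|}Q^{|C|+|D|}$, where $Q=d^*+2$ is the number of colors and $P=(d^*+1)^2+1$ is the number of compatible color pairs at a shared vertex. To reach the claimed factor I would maximize this expression subject to $|w(f_i)|\le b$ for $i=1,2,3$; the key inequality is $P\le Q^2$, that is $(d^*+1)^2+1\le(d^*+2)^2$, under which the linear program is maximized at $|A|=0,\ |B|=b/2,\ |C|=|D|=b/2$, giving the per‑node bound $(d^*+2)^b\{(d^*+1)^2+1\}^{b/2}$. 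Multiplying by $O(m)$ yields the stated running time. The delicate part is precisely this optimization together with the bookkeeping that keeps the merge cost at $P^{|A|+|B|}Q^{|C|+|D|}$ rather than the naive $Q^{|w(f_1)|+|w(f_2)|}$; everything else is routine induction.
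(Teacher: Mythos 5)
Your proposal is correct, and it follows the paper's overall strategy quite closely: a bottom-up dynamic program over the rooted branch decomposition, tables indexed by colorings of $w(f)$ with values in $\{\top\}\cup\{0,1,\ldots,d(v)\}$ carrying ``at least $d(v)-i$ neighbours of $v$ inside $G_f$ are selected'', the same four-way partition of the boundary (your $A,B,C,D$ are the paper's $X_3,X_4,X_1,X_2$), the same overlap subtraction (the $\top$-vertices of $w(f_1)\cap w(f_2)$), and the same linear program subject to $|w(f_i)|\le b$ giving the per-edge bound $(d^*+2)^b\{(d^*+1)^2+1\}^{b/2}$. Where you genuinely depart from the paper is the merge step, which is the technical core. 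The paper's transition (P3)--(P5) parametrizes, for each fixed parent coloring $c$, the pairs $(c_1,c_2)$ ``forming'' $c$ by enumerating splits $(i_1,i_2)$ of the residual demand to be supplied by vertices internal to $G_{f_1}$ versus $G_{f_2}$, together with correction terms of the form $|D_{c_1,c_2}\cap N_{G_f}(v)\cap X_{j'}|$; its correctness takes Lemmas~\ref{color1:lem} and~\ref{color2:lem}. You instead enumerate all membership-compatible pairs and project them forward, using the observation that $N_{G_{f_1}}(v)$ and $N_{G_{f_2}}(v)$ are disjoint ($E_{f_1}$ and $E_{f_2}$ are disjoint and $G$ is simple), so satisfied demands simply add and the merged residual is $\max\{0,c_1(v)+c_2(v)-d(v)\}$, required to be $0$ on $B$. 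This additive merge is sound (disjointness rules out double counting, and membership agreement on $A\cup B$ makes the size correction exact) and complete (restricting an optimal set $D$ to $V(G_{f_1})$ and $V(G_{f_2})$ induces compatible colorings that merge exactly to the coloring induced by $D$), so it computes the same tables while eliminating the $(i_1,i_2)$ bookkeeping, at no cost in the counting: both analyses charge $(d^*+1)^2+1$ per vertex of $A\cup B$ and $d^*+2$ per vertex of $C\cup D$, and your observation $(d^*+1)^2+1\le (d^*+2)^2$ is exactly what makes the LP optimum sit at $|A|=0$, $|B|=|C|=|D|=b/2$. Two trivial details to patch: subdividing an edge alone does not create an edge with $w=\emptyset$, so you should attach a pendant root as the paper does with $r_1,r_2$ (equivalently, perform one final merge with empty parent boundary); and for a leaf edge $(u,v)$ one may have $w(f)\subsetneq\{u,v\}$ (e.g.\ if $u$ has degree one), in which case the missing endpoint is internal to $G_f$ and its demand must be checked there, as the paper's base case does via conditions (\ref{color1:eq})--(\ref{color3:eq}).
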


\noindent  Due to the assumption of the above theorem, 
 we need to consider how we obtain a branch decomposition of $G$
 for the completeness of an algorithm of the vector domination problem. 
 For a branch decomposition, there exists an $O(2^{b\lg{27}}n^2)$-time
 algorithm that given a graph $G$, reports $bw(G)\geq b$, or outputs a
 branch decomposition of $G$ with width at most $3b$
 \cite{robertson1995disjointpath,demaine2005subexponential}.  
 Thus, the time to find a branch decomposition with width at most 
 $3bw(G)$ is $O(\log bw(G) 2^{bw(G)\lg{27}}n^2)$ (smaller than the time
 complexity below), and we have the following corollary. 
 \begin{corollary}\label{algo1:cor}
 A minimum vector dominating set in $G$ can be found in 
  $O((d^*+2)^{3bw(G)}\{(d^*$ $+1)^2+1\}^{3bw(G)/2}n^2)$ time. \qed
\end{corollary}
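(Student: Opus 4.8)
The plan is to derive Corollary~\ref{algo1:cor} from Theorem~\ref{algo1:th} by supplying the one ingredient the theorem takes for granted, namely a branch decomposition of $G$. The theorem already bounds the running time once a width-$b$ decomposition is in hand, so the entire content of the corollary is to show that \emph{computing} such a decomposition costs no more than running the dynamic program on it. Accordingly I would split the argument into three parts: (a) produce a decomposition of width $O(bw(G))$ without being given one, (b) feed it into Theorem~\ref{algo1:th}, and (c) verify that the cost of (a) is absorbed by the cost of (b).

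For part (a) I would invoke the constant-factor approximation algorithm cited in the paragraph preceding the corollary: for a given parameter $b$, in $O(2^{b\lg 27}n^2)$ time it either reports $bw(G)\ge b$ or outputs a branch decomposition of $G$ with width at most $3b$. Since $bw(G)$ is not known in advance, I would call this routine for increasing values of $b$ (e.g.\ $b=1,2,3,\dots$, or by doubling) and stop at the first $b$ for which a decomposition is returned. That first success occurs once $b$ reaches $bw(G)$ up to an additive constant, so the returned width is at most $3bw(G)$, and by a geometric summation the total time over all calls is dominated by the last one, matching the $O(\log bw(G)\,2^{bw(G)\lg 27}n^2)$ bound recorded in the excerpt.

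For part (b) I would set $b:=3bw(G)$ and apply Theorem~\ref{algo1:th} to the returned decomposition, obtaining running time $O((d^*+2)^{3bw(G)}\{(d^*+1)^2+1\}^{3bw(G)/2}m)$. Using $m\le n(n-1)/2=O(n^2)$ replaces $m$ by $n^2$ and yields exactly the expression claimed in the corollary; the remaining step is to add the costs of (a) and (b) and argue that the decomposition cost is dominated.

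The part that needs genuine care is this last domination check in (c), and it is the only real obstacle. Per unit of branchwidth the decomposition step contributes a factor $27$, while the dynamic program contributes $(d^*+2)^3\{(d^*+1)^2+1\}^{3/2}$; for every $d^*\ge 1$ the latter is at least $3^3\cdot 5^{3/2}>27$, so the DP term dominates and the sum collapses to the stated bound, the logarithmic search overhead included. The sole exception is the degenerate case $d^*=0$, where every demand is zero and $S=\emptyset$ is trivially a minimum vector dominating set; I would dispose of this case separately in $O(n)$ time so that the bound in Corollary~\ref{algo1:cor} holds uniformly.
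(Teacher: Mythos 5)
Your proposal takes essentially the same route as the paper: the paper likewise invokes the cited $O(2^{b\lg 27}n^2)$-time constant-factor approximation algorithm (with a search over $b$, for total time $O(\log bw(G)\,2^{bw(G)\lg 27}n^2)$) to obtain a decomposition of width at most $3bw(G)$, then plugs $b=3bw(G)$ into Theorem~\ref{algo1:th} and absorbs the decomposition cost, asserting without computation that it is ``smaller than the time complexity below.'' Your explicit domination check is a minor refinement the paper omits, and your observation that it fails for $d^*=0$ (where $27^{bw(G)}$ exceeds $(d^*+2)^{3bw(G)}\{(d^*+1)^2+1\}^{3bw(G)/2}=(16\sqrt{2}\,)^{bw(G)}$, so that trivial case must be handled separately) is a legitimate small correction to the paper's parenthetical claim.
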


Below, for proving this theorem, we will give a dynamic programming
algorithm for finding a minimum vector dominating set in $G$
in 
  $O((d^*+2)^{b}\{(d^*+1)^2+1\}^{b/2}m)$ time, based
on a branch decomposition of $G$.

Let $(T', \tau)$ be a branch decomposition of $G=(V,E)$ with $b$,
and $w': E(T') \to 2^{V}$ be the corresponding order function.
Let $T$ be the tree from $T'$ by inserting two nodes $r_1$ and $r_2$,
deleting  one arbitrarily chosen edge $(x_1,x_2)\in E(T')$,
adding three new edges $(r_1,r_2)$, $(x_1,r_2)$, and $(x_2,r_2)$;
namely, $T=(V(T')\cup \{r_1,r_2\},E(T')\cup\{(r_1,r_2), (x_1,r_2), (x_2,r_2)\}-\{(x_1,x_2)\})$.
Here, we regard $T$
with a rooted tree by choosing $r_1$ as a root.
Let $w(f)=w'(f)$ for
 every $f \in E(T)\cap E(T')$,
$w(x_1,r_2)=w(x_2,r_2)=w'(x_1,x_2)$,
and
$w(r_1,r_2)=\emptyset$.

Let $f=(y_1,y_2) \in E$ be an edge in $T$ such that $y_1$ is the parent
of $y_2$.
Let  $T(y_2)$ be the subtree of $T$ rooted at $y_2$, 
  $E_f=\{e \in E \mid \tau(e) \in V(T(y_2)) \}$,
and  $G_f$ be the subgraph of $G$ induced by $E_f$. 
Note that $w(f) \subseteq V(G_f)$ holds, since
each vertex in $w(f)$ is an end-vertex of some edge in $E_f$ 
by definition of the order function $w$.
In the following, each vertex $v \in w(f)$ will be assigned 
one of the following $d(v)+2$ colors
\[
 \{\top,0,1,2,\ldots,d(v)\}.
\]
The meaning of the color of a vertex $v$ is as follows:
for a  vertex set (possibly, a vector dominating set) $D$, 
\begin{itemize}
 \item $\top$ means that  $v \in D$.
 \item $i \in \{0,1,\ldots,d(v)\}$ means that  $v \notin D$
        and $|N_{G_f}(v)\cap D| \geq d(v)-i$.
\end{itemize}
Notice that a vertex colored by $i >0$ may need to be dominated
by some vertices in $V-V(G_f)$ for the feasibility.  
Given a coloring
$c \in \{\top,0,1,2,\ldots,d^*\}^{|w(f)|}$,
let $D_f(c) \subseteq V(G_f)$ be a vertex set with the minimum cardinality satisfying the following
(\ref{color1:eq})--(\ref{color3:eq}),
where $c(v)$ denotes the color assigned to a vertex $v \in V$:

\begin{eqnarray}
\label{color1:eq}
\!\!\!\!\!\!&& \mbox{$c(v)=\top$ if and only if $v \in D_f(c) \cap w(f)$.   }\\
\label{color2:eq}
\!\!\!\!\!\!&& \mbox{If $c(v)=i$, then
$v \in w(f)-D_f(c)$ and $|N_{G_f}(v)\cap D_f(c)|\geq d(v)-i$.} \\
\label{color3:eq}
\!\!\!\!\!\!&& 
\mbox{$|N_{G_f}(v)\cap D_f(c)|\geq d(v)$ holds for every vertex $v \in V(G_f)-w(f)-D_f(c)$.}
\end{eqnarray}

\noindent
Intuitively, $D_f(c)$ is a minimum vector dominating set in $G_f$ under the
assumption that the color for every vertex in $w(f)$ is restricted to
$c$.
Note that a vertex in $w(f)$ is  allowed  not to meet its demand in $G_f$,
because it can be dominated by some vertices in $V-V(G_f)$.
Also note that every  vertex in $V(G_f)-w(f)$ is not adjacent to 
any vertex in  $V-V(G_f)$ by Lemma~\ref{branch1:lem}(ii), and
it needs to be dominated by
 vertices only in $V(G_f)$ for the feasibility. 
We define  $A_f(c)$ as  $A_f(c)=|D_f(c)|$  if $D_f(c)$ exists and 
$A_f(c)=\infty$ otherwise.

Our dynamic programming algorithm proceeds bottom-up in $T$, while
computing
 $A_f(c)$ for all $c \in \{\top,0,1,2,\ldots,d^*\}^{|w(f)|}$
for each edge $f$ in $T$. 
We remark that
 $A_{(r_1,r_2)}(c)$ is the cardinality of a minimum vector dominating set, 
because $w(r_1,r_2)=\emptyset$ and $G_{(r_1,r_2)}=G$.
The algorithm consists of two types of procedures:
one is for leaf edges and the other is for non-leaf edges, 
where a {\em leaf edge} denotes an edge incident to a leaf of $T$.

\vspace*{0.2cm}

\noindent
{\bf Procedure for leaf edges:}
In the first step of the algorithm, we compute $A_f(c)$ for each edge
$f$ incident to a leaf of $T$.
Then, for all colorings
$c \in \{\top,0,1,2,\ldots,d^*\}^{|w(f)|}$,
let $A_f(c)$ be the number of vertices colored by $\top$ if $G_f$ and $c$
satisfy
(\ref{color1:eq}) -- (\ref{color3:eq}), and $A_f(c)=\infty$ otherwise.


For a fixed $c$, we need to  check if (\ref{color1:eq}) --
(\ref{color3:eq})
hold. This can be done in $O(|w(f)|)$ time.
Hence, this step takes $O((d^*+2)^{|w(f)|}|w(f)|)$ time.

\vspace*{0.2cm}

\noindent
{\bf Procedure for non-leaf edges:}
After the above initialization step, we visit  non-leaf edges of $T$ from
leaves to the root of $T$.
Let $f=(y_1,y_2)$ be a non-leaf edge of $T$ such that $y_1$ is the
parent of $y_2$,
 $y_3$ and $y_4$ are the children of $y_2$, and
$f_1=(y_2,y_3)$ and $f_2=(y_2,y_4)$.
Now we have already obtained $A_{f_j}(c')$ for all $c' \in 
 \{\top,0,1,2,\ldots,d^*\}^{|w(f_j)|}$, $j=1,2$.
By Lemma~\ref{branch1:lem}(i), we have $w(f) \subseteq w(f_1)\cup w(f_2)$,
 $w(f_1) \subseteq w(f_2)\cup w(f)$, and
 $w(f_2) \subseteq w(f)\cup w(f_1)$; 
let $X_1=w(f)- w(f_2)$,
$X_2=w(f)- w(f_1)$,
$X_3=w(f)\cap w(f_1) \cap w(f_2)$,
and
$X_4=w(f_1)-w(f)~(=w(f_2)-w(f))$.

We say that a coloring $c \in  \{\top,0,1,2,\ldots,d^*\}^{|w(f)|}$
of $w(f)$ is {\em formed} from a coloring $c_1$ of $w(f_1)$
and a coloring $c_2$ of $w(f_2)$ if the following (P1)--(P5) hold.

\vspace*{0.2cm}

\noindent
(P1) For every $v \in X_1 \cup X_2 \cup X_3$ with $c(v)=\top$,

\begin{quote}
 (a) If $v \in X_1 \cup X_3$, then $c_1(v)=\top$.\\
 (b) If $v \in X_2 \cup X_3$, then $c_2(v)=\top$.
\end{quote}

\noindent
(P2) For every $v \in X_4$,  $c_1(v)=\top$ if and only if $c_2(v)=\top$.

\vspace*{0.1cm}

\noindent
(P3) For every $v \in X_j-D_{c_1,c_2}$ 
where $\{j,j'\}=\{1,2\}$ and
 $D_{c_1,c_2}=\{v \in X_1 \cup X_2 \cup X_3 \cup X_4 \mid c_1(v)=\top \mbox{
or }c_2(v)=\top\}$, 
\begin{quote}
 If $c(v)=i$, then $c_j(v)=\min\{d(v),i+|D_{c_1,c_2}  \cap
 N_{G_f}(v)\cap X_{j'}|\}$.\\
(Intuitively, if $v\in X_j-D_{c_1,c_2}$ needs to be dominated by at least $d(v)-i$ vertices in
 $G_f$,
then  at least $\max\{0,d(v)-i-|D_{c_1,c_2}  \cap
 N_{G_f}(v)\cap X_{j'}|\}$ vertices from
 $V(G_{f_j})$ are necessary.)
\end{quote}

\noindent
(P4) For every $v \in X_3-D_{c_1,c_2}$, 
\begin{quote}
 If $c(v)=i$, then $c_1(v)=\min\{d(v),i+|D_{c_1,c_2}  \cap N_{G_f}(v)\cap X_2|+i_1\}$ and
 $c_2(v)=\min\{d(v),i+|D_{c_1,c_2} \cap N_{G_f}(v) \cap X_1 |+i_2\}$
for some nonnegative integers $i_1,i_2$
 with $i_1+i_2=\max\{0,d(v)-i-|D_{c_1,c_2}  \cap N_{G_f}(v) 
|\}$.\\
(Intuitively, if $v\in X_3-D_{c_1,c_2}$ needs to be dominated by at least $d(v)-i$ vertices in
 $G_f$,
then  at least $\max\{0,d(v)-i-|D_{c_1,c_2}  \cap N_{G_f}(v) 
|\}$ vertices  from 
$(V(G_{f_1})-w(f_1))\cup (V(G_{f_2})-w(f_2))$
are necessary for dominating $v$.
If  $i_1$ (resp., $i_2$) vertices among those vertices 
belong to $V(G_{f_2})-w(f_2)$ (resp., $V(G_{f_1})-w(f_1)$),
then 
at least $\max\{0,d(v)-i-|D_{c_1,c_2}  \cap
 N_{G_f}(v)\cap X_{j'}|-i_j\}$ vertices from $V(G_{f_j})$
are necessary for $\{j,j'\}=\{1,2\}$.)
\end{quote}

\noindent
(P5)  For every $v \in X_4-D_{c_1,c_2}$, 
\begin{quote}
  $c_1(v)=\min\{d(v),|D_{c_1,c_2}  \cap N_{G_f}(v)\cap X_2|+i_1\}$ and 
$c_2(v)= \min\{d(v),
|D_{c_1,c_2}  \cap N_{G_f}(v)\cap X_1|+i_2\}$
for some nonnegative integers $i_1,i_2$
 with $i_1+i_2=\max\{0,d(v)-|D_{c_1,c_2}  \cap N_{G_f}(v) 
|\}$. \\
(This case can be treated in a similar way to (P4).)
\end{quote}

As we will show in Lemmas~\ref{color1:lem} and \ref{color2:lem},
there exist 
 a coloring $c_1$ of $w(f_1)$
and a coloring $c_2$ of $w(f_2)$ forming $c$ such that
$D_{f_1}(c_1) \cup D_{f_2}(c_2)$ satisfies
(\ref{color1:eq})--(\ref{color3:eq})
and 
$|D_{f_1}(c_1) \cup D_{f_2}(c_2)| = A_f(c)$. 
Namely, we have
\[
 A_f(c)=\min\{ |A_{f_1}(c_1)|+|A_{f_2}(c_2)|- |D_{c_1,c_2}\cap (X_3\cup X_4)|
  \mid\mbox{ $c_1,c_2$ forms $c$}  \}.
\]
Thus, for all colorings $c \in \{\top,0,1,2,\ldots,d^*\}^{|w(f)|}$,
we can compute $A_f(c)$ from the information of $f_1$ and $f_2$.
By repeating these procedure bottom-up in $T$, we can
find a minimum vector dominating set in $G$.

Here, for a fixed $c$,
 we analyze the time complexity for computing $A_f(c)$.
Let $D_c=\{v \in w(f) \mid c(v)=\top\}$, 
$x_j=|X_j|$ for $j=1,2,3,4$, 
$z_3=|X_3-D_c|$, and
$z_4$ be the number of vertices in $X_4$ not colored by $\top$.
The number of pairs of a coloring $c_1$ of $w(f_1)$
and a coloring $c_2$ of $w(f_2)$ forming $c$ is at most 
\[
 (d^*+1)^{z_3} \sum_{z_4=0}^{x_4} {x_4 \choose z_4}  (d^*+1)^{z_4}  (d^*+1)^{z_4}
\]
since the number of pairs $(i_1,i_2)$ in (P4) or (P5) is at most
$d^*+1$ for each vertex in $X_3-D_c$ or each vertex in $X_4$ not colored
by $\top$.

Hence, for an edge $f$, the number of pairs forming $c$ is at most
\[
\begin{array}{l}
(d^*+2)^{x_1+x_2}\sum_{z_3=0}^{x_3} {x_3 \choose z_3} (d^*+1)^{z_3} 
 (d^*+1)^{z_3} \sum_{z_4=0}^{x_4} {x_4 \choose z_4}  (d^*+1)^{z_4}
 (d^*+1)^{z_4} \\[.2cm]
=(d^*+2)^{x_1+x_2} \{(d^*+1)^2+1\}^{x_3+x_4} 
\end{array}
\]
in total.
Now we have $x_1+x_2+x_3 \leq b$,
$x_1+x_3+x_4 \leq b$, and
$x_2+x_3+x_4 \leq b$ (recall that $b$ is the width of $(T',\tau)$).
By considering a linear programming problem which maximizes
$(x_1+x_2)\log(d^*+2)+(x_3+x_4)\log\{(d^*+1)^2+1\}$ subject to
these inequalities, 
we can observe that
$(d^*+2)^{x_1+x_2} \{(d^*+1)^2+1\}^{x_3+x_4} $
 attains the maximum when $x_1=x_2=x_4=b/2$ and $x_3=0$.
Thus, it takes in $O((d^*+2)^{b}\{(d^*+1)^2+1\}^{b/2})$ time to compute
$A_f(c)$ for all colorings $c$ of $w(f)$.

Since $|E(T)|=O(m)$ and the initialization step takes
$O((d^*+2)^{b}m)$
time in total, we can obtain $A_{(r_1,r_2)}(c)$ in $O(((d^*+2)^{b}\{(d^*+1)^2+1\}^{b/2}m)$time.

\begin{lemma}\label{color1:lem}
Let  $c \in  \{\top,0,1,2,\ldots,d^*\}^{|w(f)|}$ be a coloring of $w(f)$.
If a coloring $c_1$ of $w(f_1)$
and a coloring $c_2$ of $w(f_2)$ forms $c$, then
 $D_{f_1}(c_1) \cup D_{f_2}(c_2)$ satisfies
$(\ref{color1:eq})$--$(\ref{color3:eq})$
 for $f$.
\end{lemma}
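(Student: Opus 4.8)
The plan is to set $D=D_{f_1}(c_1)\cup D_{f_2}(c_2)$ and verify conditions (\ref{color1:eq})--(\ref{color3:eq}) for $f$ directly, using the guarantees that $D_{f_1}(c_1)$ and $D_{f_2}(c_2)$ satisfy (\ref{color1:eq})--(\ref{color3:eq}) for $f_1,f_2$ together with the properties (P1)--(P5). First I would record the structural facts that make the merge well-behaved. Since $y_3,y_4$ are the two children of $y_2$, the edge sets $E_{f_1}$ and $E_{f_2}$ partition $E_f$, so $V(G_f)=V(G_{f_1})\cup V(G_{f_2})$ and each edge of $G_f$ lies in exactly one of $G_{f_1},G_{f_2}$. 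Using Lemma~\ref{branch1:lem}(i) I would derive the boundary decompositions $w(f)=X_1\cup X_2\cup X_3$, $w(f_1)=X_1\cup X_3\cup X_4$, $w(f_2)=X_2\cup X_3\cup X_4$ (all disjoint unions) and the identity $V(G_{f_1})\cap V(G_{f_2})=X_3\cup X_4$. Two consequences are crucial: by Lemma~\ref{branch1:lem}(ii) a vertex in a pure interior $V(G_{f_j})-w(f_j)$ has all its neighbors inside $V(G_{f_j})$; and for every $v$ the neighbor sets $N_{G_{f_1}}(v)$ and $N_{G_{f_2}}(v)$ are disjoint with union $N_{G_f}(v)$, since each incident edge lies in a unique $E_{f_j}$. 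In particular a vertex of $X_1$ (resp.\ $X_2$) sees only $G_{f_1}$ (resp.\ $G_{f_2}$).

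Next I would verify (\ref{color1:eq}), i.e.\ $D\cap w(f)=\{v\in w(f)\mid c(v)=\top\}$. For $v\in w(f)$ with $c(v)=\top$, (P1) forces the corresponding child color(s) to be $\top$, so $v\in D_{f_j}(c_j)\subseteq D$ by (\ref{color1:eq}) for the child. For the converse I would use that when $c(v)$ is numeric, (P3)/(P4) assign numeric colors to $c_1(v),c_2(v)$, whence $v\notin D_{f_j}(c_j)$ by (\ref{color2:eq}) for the children; combined with the localization of $X_1,X_2$ vertices this shows no numerically colored boundary vertex lies in $D$. One point to pin down here is that the forming relation does not admit a boundary vertex that is $\top$ in a child coloring while receiving a numeric color in $c$; this is exactly what (P1) together with (P3)--(P4) is meant to guarantee.

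The main work is (\ref{color2:eq}) and (\ref{color3:eq}), which I would treat by a case analysis on the location of $v$. A vertex $v\in V(G_{f_j})-w(f_j)$ in a pure interior is immediate: all its neighbors lie in $V(G_{f_j})$, $v\notin D_{f_j}(c_j)$, and (\ref{color3:eq}) for the child already gives $|N_{G_{f_j}}(v)\cap D_{f_j}(c_j)|\ge d(v)$, hence $|N_{G_f}(v)\cap D|\ge d(v)$. A vertex $v\in X_1$ (symmetrically $X_2$) sees only $G_{f_1}$, so the correction term $|D_{c_1,c_2}\cap N_{G_f}(v)\cap X_2|$ in (P3) vanishes and (P3) yields $c_1(v)=c(v)$; then (\ref{color2:eq}) for $f_1$ gives the bound directly. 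The genuinely delicate case is a straddling vertex $v\in X_3$ (and, with $i=0$, $v\in X_4$ for (\ref{color3:eq})). Here I would write $|N_{G_f}(v)\cap D|\ge |N_{G_{f_1}}(v)\cap D_{f_1}(c_1)|+|N_{G_{f_2}}(v)\cap D_{f_2}(c_2)|\ge (d(v)-c_1(v))+(d(v)-c_2(v))$, the first inequality using disjointness of the two neighbor sets, and then substitute the (P4) formulas for $c_1(v),c_2(v)$. A short computation, using $i_1+i_2=\max\{0,d(v)-i-|D_{c_1,c_2}\cap N_{G_f}(v)|\}$ and the partition of $D_{c_1,c_2}\cap N_{G_f}(v)$ over $X_1,X_2,X_3,X_4$, collapses this sum to $d(v)-i+|D_{c_1,c_2}\cap N_{G_f}(v)\cap(X_3\cup X_4)|\ge d(v)-i$, which is exactly the required demand.

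The step I expect to be the main obstacle is precisely this $X_3$/$X_4$ accounting: one must show that summing the two child guarantees neither over- nor under-counts neighbors in the shared region $X_3\cup X_4$, and that the correction terms $|D_{c_1,c_2}\cap N_{G_f}(v)\cap X_{j'}|$ together with the split $i_1+i_2$ reconcile exactly the residual demand recorded on each side. Handling the capping by $d(v)$ in the $\min$ (the over-covered sub-case, where $d(v)-c_j(v)=0$ and the opposite side plus external coverage already suffice) and the symmetric roles of the two sides are routine but should be checked. The remaining conditions then follow mechanically from the child guarantees, so once the straddling case is settled the lemma is complete.
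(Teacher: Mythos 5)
Your proposal is correct and takes essentially the same route as the paper's own proof: both verify $(\ref{color1:eq})$--$(\ref{color3:eq})$ directly via a case analysis on $X_1$, $X_2$, $X_3$, $X_4$ and the pure interiors $V(G_{f_j})-w(f_j)$, using the child guarantees together with (P1)--(P5), and both hinge on the same accounting for straddling vertices, where the split $i_1+i_2=\max\{0,d(v)-i-|D_{c_1,c_2}\cap N_{G_f}(v)|\}$ combined with the correction terms $|D_{c_1,c_2}\cap N_{G_f}(v)\cap X_{j'}|$ collapses the sum of the two child bounds to $d(v)-i$ (after first disposing of the over-covered case $|D_{c_1,c_2}\cap N_{G_f}(v)|\ge d(v)-i$, exactly as the paper does). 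The only cosmetic difference is that you invoke disjointness of $N_{G_{f_1}}(v)$ and $N_{G_{f_2}}(v)$ (valid since $G$ is simple), whereas the paper instead subtracts the potential double count $|N_{G_f}(v)\cap D_{c_1,c_2}\cap(X_3\cup X_4)|$ by inclusion--exclusion; the resulting algebra is identical.
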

\begin{proof}
We denote $D_{f_1}(c_1) \cup D_{f_2}(c_2)$ by $D'$,
and $D' \cap (X_1\cup X_2\cup X_3 \cup X_4)$ by 
 $D'_{c_1,c_2}$.
Clearly,  (\ref{color1:eq}) holds, since $v \in D' \cap
w(f)$
if and only if $c(v)=\top$ by the above (P1). 

We next show that $D'$ satisfies  (\ref{color2:eq}).
Let $v$ be a vertex in $X_1-D'=X_1-D'_{c_1,c_2}$.
From the above  (P3), we have 
$|N_{G_{f_1}}(v)\cap
 D'|\geq d(v)-i-|D'_{c_1,c_2}  \cap N_{G_f}(v)\cap X_2|$.
It follows that $|N_{G_f}(v)\cap D'|
\geq |N_{G_{f_1}}(v)\cap D'|+|D'_{c_1,c_2}  \cap N_{G_f}(v)\cap X_2|$
$\geq d(v)-i$.
Also,  the case of $v \in X_2-D'$ can be
 treated
similarly.

Let $v$ be a vertex in $X_3-D'=X_3-D'_{c_1,c_2}$.
Since $|N_{G_f}(v)\cap D'|$
$\geq |N_{G_f}(v)\cap D'_{c_1,c_2}|$ clearly holds, then we have only to consider
the case of $|N_{G_f}(v)\cap D'_{c_1,c_2}| < d(v)-i$.
From  (P4), we have 
$|N_{G_{f_1}}(v)\cap
 D'|\geq \max\{0,d(v)-i-|D'_{c_1,c_2}  \cap N_{G_f}(v)\cap X_2|-i_1\}$
and
$|N_{G_{f_2}}(v)\cap
 D'|\geq \max\{0,d(v)-i-|D'_{c_1,c_2}  \cap N_{G_f}(v)\cap X_1|-i_2\}$
where $i_1+i_2=d(v)-i-|D'_{c_1,c_2}  \cap N_{G_f}(v) 
|$ (note that $i_1+i_2>0$ from the  assumption of
 this case).
Notice that $(V(G_{f_1})-w(f_1))\cap (V(G_{f_2})-w(f_2))=\emptyset$ by
Lemma~\ref{branch1:lem}(ii).
It follows that $|N_{G_f}(v)\cap D'|\geq $
$|N_{G_{f_1}}(v)\cap D'|+|N_{G_{f_2}}(v)\cap D'|$
$-|N_{G_f}(v)\cap D'_{c_1,c_2} \cap (X_3 \cup X_4)|$
$\geq 2(d(v)-i)-|N_{G_f}(v)\cap D'_{c_1,c_2}|-i_1-i_2$
$=d(v)-i$.

We finally show that $D'$ satisfies (\ref{color3:eq}).
Let $v$ be a vertex in $X_4-D'$.
Since $|N_{G_f}(v)\cap D'|$
$\geq |N_{G_f}(v)\cap D'_{c_1,c_2}|$ clearly holds, 
then we have only to consider
the case of $|N_{G_f}(v)\cap D'_{c_1,c_2}| < d(v)$.
From  (P5), we have 
$|N_{G_{f_1}}(v)\cap
 D'|\geq \max\{0,d(v)-|D'_{c_1,c_2}  \cap N_{G_f}(v)\cap X_2|-i_1\}$
and
$|N_{G_{f_2}}(v)\cap
 D'|\geq \max\{0, d(v)-|D'_{c_1,c_2}  \cap N_{G_f}(v)\cap X_1|-i_2\}$
where $i_1+i_2=d(v)-|D'_{c_1,c_2}  \cap N_{G_f}(v)| >0$.
Hence, we have $|N_{G_f}(v)\cap D'|\geq $
$|N_{G_{f_1}}(v)\cap D'|+|N_{G_{f_2}}(v)\cap D'|$
$-|N_{G_f}(v)\cap D'_{c_1,c_2} \cap (X_3 \cup X_4)|$
$=2d(v)-|N_{G_f}(v)\cap D'_{c_1,c_2}|-i_1-i_2$
$=d(v)$.
Also, it follows from the definition of $D_{f_j}(c_j)$
 that $v \in V(G_{f_j})-w(f_j)$ satisfies  (\ref{color3:eq}) for $j=1,2$.
\qed
\end{proof}

\begin{lemma}\label{color2:lem}
Let  $c \in  \{\top,0,1,2,\ldots,d^*\}^{|w(f)|}$ be a coloring of $w(f)$.
There exist 
 a coloring $c_1$ of $w(f_1)$
and a coloring $c_2$ of $w(f_2)$ forming $c$ such that
$|D_{f_1}(c_1) \cup D_{f_2}(c_2)| \leq A_f(c)$.  
\end{lemma}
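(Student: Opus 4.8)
The plan is to prove the converse of Lemma~\ref{color1:lem} by a \emph{restriction} argument: I start from an optimal set realizing $A_f(c)$, split it along the node $y_2$ into the two sides, and read off the colorings that this split induces. Concretely, let $D=D_f(c)$ be a minimum set satisfying (\ref{color1:eq})--(\ref{color3:eq}) with $|D|=A_f(c)$, and put $D_1=D\cap V(G_{f_1})$ and $D_2=D\cap V(G_{f_2})$. Since $E_f=E_{f_1}\cup E_{f_2}$ we have $V(G_f)=V(G_{f_1})\cup V(G_{f_2})$, and a short check from the definition of the order function gives $V(G_{f_1})\cap V(G_{f_2})=w(f_1)\cap w(f_2)=X_3\cup X_4$; hence $D=D_1\cup D_2$ and $D_1\cap D_2=D\cap(X_3\cup X_4)$. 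I record two facts I will use repeatedly: because the graph is simple and $E_{f_1}\cap E_{f_2}=\emptyset$, the neighbourhoods $N_{G_{f_1}}(v)$ and $N_{G_{f_2}}(v)$ are disjoint; and by Lemma~\ref{branch1:lem}(ii) the interiors $V(G_{f_1})-w(f_1)$ and $V(G_{f_2})-w(f_2)$ are disjoint and mutually non-adjacent.

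Next I would define the induced colorings. For $v\in w(f_1)$ set $c_1(v)=\top$ exactly when $v\in D_1$, and otherwise let $c_1(v)$ encode the residual demand of $v$ inside $G_{f_1}$, namely $c_1(v)=\max\{0,\,d(v)-|N_{G_{f_1}}(v)\cap D_1|\}$; define $c_2$ symmetrically. For the shared vertices $v\in X_3\cup X_4$ the split parameters demanded by (P4),(P5) are taken to be the \emph{actual} interior contributions, $i_1=|N_{G_{f_2}}(v)\cap D\cap(V(G_{f_2})-w(f_2))|$ and $i_2=|N_{G_{f_1}}(v)\cap D\cap(V(G_{f_1})-w(f_1))|$. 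I then verify that $(c_1,c_2)$ \emph{forms} $c$: (P1),(P2) are immediate from (\ref{color1:eq}) for $f$ together with the definition $c_j(v)=\top\Leftrightarrow v\in D$; for $v\in X_1$ (resp.\ $X_2$) one observes $v\notin V(G_{f_2})$ (resp.\ $V(G_{f_1})$), so $N_{G_f}(v)=N_{G_{f_1}}(v)$, $v$ has no neighbour in $X_2$ (resp.\ $X_1$), and (P3) collapses to $c_j(v)=c(v)$, which follows from (\ref{color2:eq}); and (P4),(P5) are checked by decomposing $N_{G_f}(v)\cap D$ into its two disjoint sides and matching the chosen $i_1,i_2$ against $\max\{0,d(v)-i-|D_{c_1,c_2}\cap N_{G_f}(v)|\}$.

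It then remains to show that the restricted sets are feasible witnesses for the induced colorings, i.e.\ that $D_1$ satisfies (\ref{color1:eq})--(\ref{color3:eq}) for $f_1$ (and symmetrically $D_2$ for $f_2$), so that $A_{f_1}(c_1)\le|D_1|$ and $A_{f_2}(c_2)\le|D_2|$. Conditions (\ref{color1:eq}),(\ref{color2:eq}) hold by construction; the point of (\ref{color3:eq}) is that an interior vertex $v\in V(G_{f_1})-w(f_1)$ lies in $V(G_f)-w(f)$ and, by Lemma~\ref{branch1:lem}(ii), has $N_{G_f}(v)=N_{G_{f_1}}(v)$, so the demand already met for $v$ by $D$ at $f$ is met verbatim by $D_1$ at $f_1$. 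Combining with the set identities then yields
\[
|D_{f_1}(c_1)\cup D_{f_2}(c_2)|=|D_{f_1}(c_1)|+|D_{f_2}(c_2)|-|D_{f_1}(c_1)\cap D_{f_2}(c_2)|\le |D_1|+|D_2|-|D\cap(X_3\cup X_4)|=|D|=A_f(c),
\]
where I use $|D_{f_j}(c_j)|\le|D_j|$ and the fact that each vertex of $D\cap(X_3\cup X_4)$ is coloured $\top$ on both sides and hence lies in $D_{f_1}(c_1)\cap D_{f_2}(c_2)$.

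The step I expect to be the main obstacle is the \emph{joint} verification of (P4)/(P5) with the feasibility condition (\ref{color2:eq}) for the shared vertices $X_3$ (and $X_4$). There the demand of $v$ is genuinely split between the two sides: a separator neighbour of $v$ on one side contributes to that side's count only, yet the forming rule ties $c_1(v)$ and $c_2(v)$ together through the common budget $i$ and through the split $(i_1,i_2)$ whose sum is pinned to $\max\{0,d(v)-i-|D_{c_1,c_2}\cap N_{G_f}(v)|\}$. The crux is to argue that the actual interior counts $i_1,i_2$ chosen above are nonnegative, sum to exactly this pinned value, and still leave both restricted sets feasible, i.e.\ that $c_j(v)\ge d(v)-|N_{G_{f_j}}(v)\cap D_j|$ for $j=1,2$. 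This amounts to charging each dominator of $v$ in $D$ to exactly one side, and it is precisely where the disjointness statements of Lemma~\ref{branch1:lem}(ii) and the simple-graph assumption must be invoked with the greatest care.
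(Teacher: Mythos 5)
Your overall skeleton is the same as the paper's own proof: restrict an optimal witness $D=D_f(c)$ to the two sides, read off induced colorings, show the restricted sets are feasible for them, and finish with inclusion--exclusion over $D\cap(X_3\cup X_4)$. The gap is exactly at the point you flag as the crux, and the claim you need there is false in general. Conditions (\ref{color2:eq}) and (\ref{color3:eq}) only give \emph{inequalities} on $|N_{G_f}(v)\cap D|$, whereas the forming rules pin \emph{exact} values: (P3) forces $c_1(v)=c(v)$ for $v\in X_1$ (as you note, the rule collapses since $v$ has no neighbor in $X_2$), and (P4)/(P5) force $i_1+i_2$ to equal $\max\{0,d(v)-i-|D_{c_1,c_2}\cap N_{G_f}(v)|\}$. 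An optimal $D$ can over-dominate a vertex (its members may be needed to dominate other vertices), and then your residual colorings do not form $c$. Concretely, take $v\in X_4$ with $d(v)=1$, $v\notin D$, whose only two neighbors are $u_1\in V(G_{f_1})-w(f_1)$ and $u_2\in V(G_{f_2})-w(f_2)$, both forced into $D$ by other demands. Your prescription gives $i_1=i_2=1$, whose sum $2$ exceeds the pinned value $\max\{0,1-0\}=1$, and gives residual colors $c_1(v)=c_2(v)=0$; but the only pairs producible by (P5) are $(c_1(v),c_2(v))\in\{(1,0),(0,1)\}$. So your $(c_1,c_2)$ simply does not form $c$, and the lemma is not proved. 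The same over-domination phenomenon breaks your assertion that (P3) ``follows from (\ref{color2:eq})'': the residual color can be strictly smaller than $c(v)$.

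The paper concedes precisely this point (``$c_1$ and $c_2$ does not necessarily form $c$'') and repairs it with a monotonicity argument that is missing from your proposal: feasibility is monotone in the colors, because a larger color is a weaker demand. One keeps the $\top$-assignments, and at each vertex of $X_3\cup X_4$ one does \emph{not} take $i_1,i_2$ to be the actual interior counts, but rather the deficiencies $\max\{0,d(v)-i-|N_{G_{f_j}}(v)\cap D|-|D_{c_1,c_2}\cap N_{G_f}(v)\cap X_{j'}|\}$; these sum to at most the pinned value, and one then pads them upward until the sum constraint of (P4)/(P5) holds with equality. The resulting colorings $c_1',c_2'$ form $c$ and satisfy $c_j'(v)\geq c_j(v)$ pointwise. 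Since $d(v)-c_j'(v)\leq d(v)-c_j(v)\leq|N_{G_{f_j}}(v)\cap D|$, the restricted set $D\cap V(G_{f_j})$ still witnesses (\ref{color1:eq})--(\ref{color3:eq}) for $c_j'$, so minimality gives $|D_{f_j}(c_j')|\leq|D\cap V(G_{f_j})|$, and your final inclusion--exclusion count goes through verbatim with $c_1',c_2'$ in place of $c_1,c_2$ (the intersection is still exactly the vertices of $D\cap(X_3\cup X_4)$, which are $\top$ on both sides). In short: abandon the exact matching of $i_1,i_2$ to interior counts, and instead round the induced colorings up to formable ones.
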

\begin{proof}
For each vertex $v \in w(f_j)$, $j=1,2,$ let\\ 
\[
 c_j(v)= \left\{   
 \begin{array}{ll}
  \top & \mbox{ if }v \in D_f(c),\\
 \min\{d(v),c(v)+|N_{G_f}(v)\cap D_f(c) -V(G_{f_j})|\} & \mbox{ if }v \in X_j-D_f(c), \\
\max\{0,d(v)-|N_{G_{f_j}}(v)\cap
 D_f(c)|\} & \mbox{ if }v \in X_3 \cup X_4-D_f(c).
 \end{array}
\right.
\]
For $v \in X_j-D_f(c)$, 
we have
$|N_{G_f}(v) \cap D_f(c)|=$
 $|N_{G_{f_j}}(v) \cap D_f(c)| + |N_{G_f}(v)\cap D_f(c) -V(G_{f_j})| \geq d(v)-c(v)$,
since $D_f(c)$ satisfies (\ref{color2:eq}).
Hence, $|N_{G_{f_j}}(v) \cap D_f(c)|\geq 
\max\{0,d(v)-c(v)-|N_{G_f}(v)\cap D_f(c) -V(G_{f_j})|\}=
d(v)-c_j(v)$ for
all $v \in w(f_j)-D_f(c)$.
It follows from that 
 the minimality of $A_{f_j}(c_j)$ implies that
$|D_f(c) \cap V(G_{f_j})|\geq A_{f_j}(c_j)$; hence, 
$A_f(c) \geq |D_{f_1}(c_1) \cup D_{f_2}(c_2)|$.  
On the other hand, $c_1$ and $c_2$ does not necessarily form $c$.
Below, we show that there exist a coloring $c_1'$ of $w(f_1)$ 
and a coloring  $c_2'$ of $w(f_2)$ forming $c$ such that
$c_j'(v)\geq c_j(v)$ for every $v \in w(f_j)-D_{f}(c)$ for $j=1,2$.
%
Note that  $D_{f_j}(c_j)$ satisfies (\ref{color1:eq})--(\ref{color3:eq}) 
also for
 $c_j'$,
since $|N_{G_{f_j}}(v) \cap D_{f_j}(c)|\geq d(v)-c_j(v)\geq d(v)-c_j'(v)$
for every $v \in w(f_j)-D_{f}(c)$. 
Hence, from the minimality of $|D_{f_j}(c_j')|$, 
we have 
$A_f(c) \geq |D_{f_1}(c_1) \cup D_{f_2}(c_2)| \geq |D_{f_1}(c_1') \cup
 D_{f_2}(c_2')|$, which proves this lemma.

We can construct such $c_1'$, $c_2'$ as follows.
First let $c_j'(v)=c_j(v)$ for all $v \in X_1 \cup X_2 \cup D_f(c)$;
$c'_1$ and $c_2'$ satisfy (P1) and (P2)
 in the definition of a coloring $c$ formed by $c_1$ and $c_2$.
By Lemma~\ref{branch1:lem}(ii), every $v \in X_j$ satisfies 
$N_{G_f}(v)\cap D_f(c) -V(G_{f_j})=N_{G_f}(v)\cap D_f(c) \cap X_{j'}$
for $\{j,j'\}=\{1,2\}$.
Hence, $c_j'(v) (=c_j(v))$ for $v \in X_j-D_f(c)$, $j=1,2$ satisfies 
 (P3).

Let $v \in X_3-D_f(c)$. 
Since $D_f(c)$ satisfies (\ref{color2:eq}), 
we have $|N_{G_f}(v)\cap D_f(c)|\geq d(v)-c(v)$.
Now from construction of $c_1$ and $c_2$,
the value $i_1'$ (resp., $i_2'$) corresponding to $i_1$ (resp., $i_2$) in (P4)
in the definition of  $c$ formed by $c_1$ and $c_2$
is
$\max\{0,d(v)-|N_{G_{f_1}}(v)\cap D_f(c)|-c(v)-|N_{G_f}(v)\cap X_2 \cap D_f(c)|\}$
(resp., 
$\max\{0,d(v)-|N_{G_{f_2}}(v)\cap D_f(c)|-c(v)-|N_{G_f}(v)\cap X_1 \cap D_f(c)|\}$).
It follows that $i_1'+i_2'  $
$\leq \max\{0,d(v)-c(v)-|N_{G_f}(v)\cap D_f(c) \cap (X_1 \cup X_2 \cup X_3 \cup
 X_4)|\}$ (note that the final inequality follows from $|N_{G_f}(v)\cap D_f(c)|\geq d(v)-c(v)$).

Let $v \in X_4-D_f(c)$. 
Since $D_f(c)$ satisfies (\ref{color2:eq}), 
we have $|N_{G_f}(v)\cap D_f(c)|\geq d(v)$.
From construction of $c_1$ and $c_2$,
the value $i_1'$ (resp., $i_2'$) corresponding to $i_1$ (resp., $i_2$) in (P5)
in the definition of  $c$ formed by $c_1$ and $c_2$
is
$\max\{0,d(v)-|N_{G_{f_1}}(v)\cap D_f(c)|-|N_{G_f}(v)\cap X_2 \cap D_f(c)|\}$
(resp., 
$\max\{d(v)-|N_{G_{f_2}}(v)\cap D_f(c)|-|N_{G_f}(v)\cap X_1 \cap D_f(c)|\}$). 
It follows that $i_1'+i_2'  \leq \max\{0,d(v)-|N_{G_f}(v)\cap D_f(c) \cap (X_1 \cup X_2 \cup X_3 \cup
 X_4)|\}$.

Consequently, we can construct
a coloring $c_1'$ of $w(f_1)$ 
and a coloring  $c_2'$ of $w(f_2)$ forming $c$ such that
$c_j'(v)\geq c_j(v)$ for every $v \in X_3 \cup X_4-D_f(c)$ 
and
$c_j'(v)= c_j(v)$ for every $v \in D_f(c) \cup X_1 \cup X_2$ 
for $j=1,2$
by increasing $i_1'$ or $i_2'$ for each vertex  $v \in X_3 \cup X_4-D_f(c)$
so that $i_1'+i_2'$ becomes equal to
$\max\{0,d(v)-c(v)-|N_{G_f}(v)\cap D_f(c) \cap (X_1 \cup X_2 \cup X_3 \cup
 X_4)|\}$ 
(resp., $\max\{0,d(v)-|N_{G_f}(v)\cap D_f(c) \cap (X_1 \cup X_2 \cup X_3 \cup
 X_4)|\}$)
if  $v \in X_3$ (resp., $v \in X_4$).
\qed
\end{proof}

Summarizing the  arguments given so far, we have shown Theorem~\ref{algo1:th}.

\subsection{Total vector domination and multiple domination}\label{algo2:subsec}

We consider the total vector domination problem.
The difference between the total vector domination and the vector  domination
is that each vertex selected as a member in  a dominating set needs to be
dominated or not.
Hence, we will
 modify the following parts (I)--(III) in the algorithm
for vector domination given in the previous subsection 
so that
each vertex selected as a member in a dominating set also satisfies its
demand.

\vspace*{0.2cm}

(I) Color assignments:
Let $f \in E(T)$ be an edge in a branch decomposition $T$ of $G$.
We will assign  to each vertex $v \in w(f)$  an ordered pair $(\ell, i)$ 
 of colors, $\ell \in \{\top, \bot \}$, $i \in \{0,1,\ldots,d(v)\}$,
 where  $\top$  means that $v$ belongs
  to the dominating set,
$\bot$  means that $v$ does not belong
  to the dominating set, and
and $i$ means that $v$ is dominated by at least $d(v)-i$ vertices
in $G_f$.

\vspace*{0.1cm}

(II) Conditions for $D_f(c)$:
For a coloring $c \in (\{\top,\bot\}\times\{0,1,2,\ldots,$
$d^*\})^{|w(f)|}$,
we modify (\ref{color1:eq})--(\ref{color3:eq}) as follows,
where let $c(v)=(c^1(v),c^2(v))$:

\begin{eqnarray}
\label{color2-1:eq}
\nonumber
\!\!\!\!\!\!&& \mbox{$c^1(v)=\top$ if and only if $v \in D_f(c) \cap w(f)$.  }\\
\label{color2-2:eq}
\nonumber
\!\!\!\!\!\!&& \mbox{If $c^2(v)=i$, then  $|N_{G_f}(v)\cap D_f(c)|\geq d(v)-i$.} \\
\label{color2-3:eq}
\nonumber
\!\!\!\!\!\!&& 
\mbox{$|N_{G_f}(v)\cap D_f(c)|\geq d(v)$ holds for every vertex $v \in V(G_f)-w(f)$.}
\end{eqnarray}

(III) Definition of a coloring $c$ formed by $c_1$ and $c_2$:
For a coloring $c \in (\{\top,\bot\}\times\{0,1,2,\ldots,$
$d^*\})^{|w(f)|}$,
we modify (P1)--(P5) as follows:

\vspace*{0.2cm}

\noindent
(P1') For every $v \in X_1 \cup X_2 \cup X_3$ with $c^1(v)=\top$ (resp.,
$c^1(v)=\bot$),

\begin{quote}
 (a) If $v \in X_1 \cup X_3$, then $c^1_1(v)=\top$ (resp., $c^1_1(v)=\bot$).\\
 (b) If $v \in X_2 \cup X_3$, then $c^1_2(v)=\top$ (resp., $c^1_2(v)=\bot$).
\end{quote}

\noindent
(P2') For every $v \in X_4$,  $c^1_1(v)=\top$ (resp., $c^1_1(v)=\bot$)
if and only if $c^1_2(v)=\top$ (resp., $c^1_2(v)=\bot$).

\vspace*{0.1cm}

\noindent
(P3') For every $v \in X_j$ 
where $\{j,j'\}=\{1,2\}$ and
 $D_{c_1,c_2}=\{v \in X_1 \cup X_2 \cup X_3 \cup X_4 \mid c^1_1(v)=\top \mbox{
or }c^1_2(v)=\top\}$, 
\begin{quote}
 If $c^2(v)=i$, then $c^2_j(v)=\min\{d(v),i+|D_{c_1,c_2}  \cap
 N_{G_f}(v)\cap X_{j'}|\}$.
\end{quote}

\noindent
(P4') For every $v \in X_3$, 
\begin{quote}
 If $c^2(v)=i$, then $c^2_1(v)=\min\{d(v),i+|D_{c_1,c_2}  \cap N_{G_f}(v)\cap X_2|+i_1\}$ and
 $c^2_2(v)=\min\{d(v),i+|D_{c_1,c_2} \cap N_{G_f}(v) \cap X_1 |+i_2\}$
for some nonnegative integers $i_1,i_2$
 with $i_1+i_2=\max\{0,d(v)-i-|D_{c_1,c_2}  \cap N_{G_f}(v) \cap (X_1
 \cup X_2 \cup X_3 \cup X_4)|\}$.
\end{quote}

\noindent
(P5')  For every $v \in X_4$, 
\begin{quote}
  $c^2_1(v)=\min\{d(v),|D_{c_1,c_2}  \cap N_{G_f}(v)\cap X_2|+i_1\}$ and 
$c^2_2(v)= \min\{d(v),
|D_{c_1,c_2}  \cap N_{G_f}(v)\cap X_1|+i_2\}$
for some nonnegative integers $i_1,i_2$
 with $i_1+i_2=\max\{0,d(v)-|D_{c_1,c_2}  \cap N_{G_f}(v) \cap (X_1 \cup
 X_2 \cup X_3 \cup X_4)|\}$. 
\end{quote}


We analyze the time complexity of this modified algorithm.
Similarly to the case of the vector domination, 
the total running time is dominated by total complexity  for computing 
$A_f(c)$ for non-leaf edges $f$.

Let $f$ be a non-leaf edge of $T$ and 
$x_i$, $i=1,2,3,4$ and $z_4$ be defined as the previous subsection.
The number of pairs of a coloring $c_1$ of $w(f_1)$
and a coloring $c_2$ of $w(f_2)$ forming $c$ is at most 
\[
 (d^*+1)^{x_3} \sum_{z_4=0}^{x_4} {x_4 \choose z_4}  (d^*+1)^{x_4}  (d^*+1)^{x_4}
\]
since the number of pairs $(i_1,i_2)$ in (P4') or (P5') is at most
$d^*+1$ for each vertex in $X_3 \cup X_4$.
Hence, for an edge $f$, the number of pairs forming $c$ is at most
\[
\begin{array}{l}
\{2(d^*+1)\}^{x_1+x_2}\sum_{z_3=0}^{x_3} {x_3 \choose z_3} (d^*+1)^{x_3} 
 (d^*+1)^{x_3} \sum_{z_4=0}^{x_4} {x_4 \choose z_4}  (d^*+1)^{x_4}
 (d^*+1)^{x_4} \\[.2cm]
=\{2(d^*+1)\}^{x_1+x_2} \{2(d^*+1)^2\}^{x_3+x_4} 
\end{array}
\]
in total.
Since $x_1+x_2+x_3 \leq b$,
$x_1+x_3+x_4 \leq b$, and
$x_2+x_3+x_4 \leq b$,
it follows that
$(x_1+x_2)\log(2d^*+2)+(x_3+x_4)\log\{2(d^*+1)^2\}$ 
 attains the maximum 
when $x_1=x_2=x_4=b/2$ and $x_3=0$.
Thus, it takes in $O(2^{3b/2}(d^*+1)^{2b})$ 
 time to compute
$A_f(c)$ for all colorings $c$ of $w(f)$.
Namely, we obtain the following theorem.

\begin{theorem}\label{algo2:th}
If  a branch decomposition of $G$ with width $b$ is given, 
a minimum total vector dominating set in $G$  can be found in 
 $O(2^{3b/2}(d^*+1)^{2b}m)$ 
time. 
\qed
\end{theorem}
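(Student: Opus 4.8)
The plan is to establish correctness of the modified dynamic program through two lemmas that are exact analogues of Lemmas~\ref{color1:lem} and \ref{color2:lem}, and then to invoke the per-edge running-time bound already derived above. Throughout, the only structural change from the vector case is that the counter $c^2(v)$ is now tracked and constrained for \emph{every} vertex of $w(f)$, including those with $c^1(v)=\top$; this is precisely what enforces that a selected vertex also meets its own demand, which is the defining feature of total vector domination.

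First I would prove the soundness direction (the analogue of Lemma~\ref{color1:lem}): if colorings $c_1$ of $w(f_1)$ and $c_2$ of $w(f_2)$ form $c$ under (P1')--(P5'), then $D':=D_{f_1}(c_1)\cup D_{f_2}(c_2)$ satisfies (\ref{color2-1:eq})--(\ref{color2-3:eq}) for $f$. Condition (\ref{color2-1:eq}) is immediate from (P1') and (P2'). For (\ref{color2-2:eq}) and (\ref{color2-3:eq}) I would reuse the neighborhood decomposition from the proof of Lemma~\ref{color1:lem}: for $v\in X_1\cup X_2$ split $N_{G_f}(v)\cap D'$ into its part inside $G_{f_j}$ and its part in $X_{j'}$ via (P3'); for $v\in X_3\cup X_4$ apply Lemma~\ref{branch1:lem}(ii), which gives $(V(G_{f_1})-w(f_1))\cap(V(G_{f_2})-w(f_2))=\emptyset$ and no edge across, so the two subtree contributions add while only $N_{G_f}(v)\cap D'_{c_1,c_2}\cap(X_3\cup X_4)$ is double counted, yielding $|N_{G_f}(v)\cap D'|\ge d(v)-i$. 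The difference from the vector proof is only that this computation is now carried out for all of $X_1\cup\dots\cup X_4$ rather than for $X_j-D_{c_1,c_2}$ and so on, matching the removal of the ``$-D_{c_1,c_2}$'' restriction in (P3')--(P5').

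Next I would prove completeness (the analogue of Lemma~\ref{color2:lem}): for any $c$ there exist $c_1,c_2$ forming $c$ with $|D_{f_1}(c_1)\cup D_{f_2}(c_2)|\le A_f(c)$. Starting from an optimal $D_f(c)$, I define raw colorings $c_j$ by setting $c_j^1(v)=\top$ iff $v\in D_f(c)$ and letting $c_j^2(v)$ equal the residual demand of $v$ witnessed inside $G_{f_j}$; minimality of $A_{f_j}$ then gives $|D_f(c)\cap V(G_{f_j})|\ge A_{f_j}(c_j)$. Since these $c_j$ need not satisfy (P4')/(P5') exactly, I adjust them to $c_1',c_2'$ that do form $c$ by distributing, for each $v\in X_3\cup X_4$, the leftover demand $\max\{0,d(v)-c^2(v)-|N_{G_f}(v)\cap D_f(c)\cap(X_1\cup\dots\cup X_4)|\}$ as $i_1+i_2$ between the two subtrees; raising a counter only relaxes (\ref{color2-2:eq})--(\ref{color2-3:eq}), so $A_{f_j}(c_j')\le A_{f_j}(c_j)$ and the inequality propagates.

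Together the two lemmas give $A_f(c)=\min\{A_{f_1}(c_1)+A_{f_2}(c_2)-|D_{c_1,c_2}\cap(X_3\cup X_4)|\mid c_1,c_2\text{ form }c\}$, so the value at the root edge $(r_1,r_2)$ is the size of a minimum total vector dominating set; multiplying the per-edge cost $O(2^{3b/2}(d^*+1)^{2b})$ computed above by $|E(T)|=O(m)$ gives the claimed bound. I expect the main obstacle to be the completeness adjustment for $v\in X_3$: one must verify that the leftover demand can always be split as $i_1+i_2$ consistently with the internal contributions of both subtrees, which is exactly where the disjointness and no-cross-edge property of Lemma~\ref{branch1:lem}(ii) is indispensable.
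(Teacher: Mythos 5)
Your proposal is correct and follows essentially the same route as the paper: the paper proves Theorem~\ref{algo2:th} by making exactly the modifications (I)--(III) with the counter $c^2(v)$ now tracked for all vertices of $w(f)$ (including selected ones), appealing to the fact that the correctness arguments of Lemmas~\ref{color1:lem} and \ref{color2:lem} carry over verbatim, and then performing the same per-edge counting $\{2(d^*+1)\}^{x_1+x_2}\{2(d^*+1)^2\}^{x_3+x_4}$ maximized at $x_1=x_2=x_4=b/2$, $x_3=0$. The only difference is that you spell out the two correctness lemmas explicitly, which the paper leaves implicit with ``similarly to the case of the vector domination.''
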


Also, by replacing $N_G()$ with $N_G[]$ in the modification for
total vector domination, we can
obtain the following theorem for the multiple domination problems.

\begin{theorem}\label{algo3:th}
If  a branch decomposition of $G$ with width $b$ is given, 
a minimum multiple dominating set in $G$  can be found in 
 $O(2^{3b/2}(d^*+1)^{2b}m)$ 
time. 
\qed
\end{theorem}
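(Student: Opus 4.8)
The plan is to obtain Theorem~\ref{algo3:th} as a direct adaptation of the total vector domination algorithm behind Theorem~\ref{algo2:th}, following the hint that one replaces every open neighborhood $N_{G_f}(v)$ by the closed neighborhood $N_{G_f}[v]=N_{G_f}(v)\cup\{v\}$ in the conditions (II) and in the forming rules (III). Concretely, I would keep the same colors $(\ell,i)$ with $\ell\in\{\top,\bot\}$ and $i\in\{0,\ldots,d(v)\}$, but reinterpret the deficiency color so that $c^2(v)=i$ now asserts $|N_{G_f}[v]\cap D_f(c)|\ge d(v)-i$; thus a vertex selected into the solution (color $\top$) already contributes one unit toward meeting its own demand. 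Equivalently, multiple domination is exactly total vector domination in which every selected vertex $v$ has its effective open-neighborhood demand lowered to $\max\{0,d(v)-1\}$, the lowering being triggered precisely by $c^1(v)=\top$.

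The key structural observation I would rely on is that $N_{G_f}[v]$ and $N_{G_f}(v)$ differ only in the single vertex $v$, and that $v\in V(G_f)$ whenever $v\in w(f)$ (since $w(f)\subseteq V(G_f)$). Hence the self-contribution is a purely local quantity: it is switched on the moment $v$ enters the processed subgraph $G_f$ and equals $1$ if and only if $c^1(v)=\top$. In particular, the cross-boundary terms appearing in the merge rules, namely $N_{G_f}(v)\cap X_{j'}$, are unaffected by the change, because a simple graph has no self-loops and so $v\notin N_{G_f}(v)$; the only place the closed neighborhood matters is the base count $|N_{G_{f_j}}[v]\cap D|$.

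With this in place I would re-run the two correctness lemmas. The analogue of Lemma~\ref{color1:lem} (soundness of forming) and the analogue of Lemma~\ref{color2:lem} (existence of a forming pair attaining $A_f(c)$) go through essentially verbatim, the only new point being the inclusion--exclusion that recombines $|N_{G_{f_1}}[v]\cap D'|$ and $|N_{G_{f_2}}[v]\cap D'|$ into $|N_{G_f}[v]\cap D'|$. For a boundary vertex $v$ this uses $(V(G_{f_1})-w(f_1))\cap(V(G_{f_2})-w(f_2))=\emptyset$ from Lemma~\ref{branch1:lem}(ii), exactly as before; the subtlety I expect to be the main obstacle is the single-counting of the self-term for $v\in X_3$: when $v\in w(f)\cap w(f_1)\cap w(f_2)$ is selected, the element $v$ is present in both child counts $N_{G_{f_1}}[v]$ and $N_{G_{f_2}}[v]$, and I must check that it is removed exactly once by the overlap term taken over $X_3\cup X_4$, so that $v$ is credited a single unit of self-domination and no more.

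Finally, the running-time bound is inherited unchanged. Each boundary vertex still carries one of $2(d^*+1)$ colors, and the count of pairs $(c_1,c_2)$ forming a fixed $c$ is identical to the total vector domination case, since (P3$'$)--(P5$'$) are structurally the same and the $\max\{0,d(v)-1\}$ adjustment does not create new choices. Therefore the per-edge cost remains $O(2^{3b/2}(d^*+1)^{2b})$, and multiplying by $|E(T)|=O(m)$ yields the claimed $O(2^{3b/2}(d^*+1)^{2b}m)$ time, completing the proof.
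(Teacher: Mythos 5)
Your proposal is correct and takes essentially the same approach as the paper: the paper obtains Theorem~\ref{algo3:th} in a single sentence, precisely by replacing $N_G(\cdot)$ with $N_G[\cdot]$ in the total vector domination algorithm of Theorem~\ref{algo2:th}, which is exactly what you do. Your additional details (the self-contribution being purely local, the cross-boundary terms $N_{G_f}(v)\cap X_{j'}$ being unchanged since $v\notin N_{G_f}(v)$, and the unchanged count of forming pairs) just fill in what the paper leaves implicit.
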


\section{Subexponential fixed parameter algorithm for planar graphs}\label{fpt:sec}

We consider the problem of 
checking whether a given graph $G$ 
has a $d$-vector dominating set with cardinality
at most $k$.
As mentioned in Subsection~\ref{related:subsec},
if $G$ is $\rho$-degenerated, then the problem can be solved in
$k^{O(\rho k^2)}n^{O(1)}$ time. 
Since a planar graph is 5-degenerated, 
 it follows that 
the problem with a planar graph can be solved in
$k^{O(k^2)}n^{O(1)}$ time.
In this section, we give a subexponential fixed-parameter algorithm, 
parameterized by $k$,  
 for a planar graph; namely,
we will show the following theorem.

\begin{theorem}\label{fpt1:th}
If $G$ is a planar graph, then we can check in
 $O(n^4+(\min\{d^*,k\}+2)^{b^*}\{(\min\{d^*,k\}+1)^2+1\}^{b^*/2}n)$
 time whether $G$ has a $d$-vector dominating set with cardinality at
 most $k$ or not, where $b^*=\min\{12\sqrt{k+z}+9,20\sqrt{k}+17\}$ and
 $z=|\{v\in V \mid d(v)=0\}|$. 
 \qed
\end{theorem}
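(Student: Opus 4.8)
The plan is to combine the polynomial-time dynamic programming algorithm of Theorem~\ref{algo1:th} with a branchwidth bound tailored to the vector domination problem. The overall strategy follows the bidimensionality-style approach used for the ordinary dominating set~\cite{fomin2006dominating}, but with the crucial modification needed to handle demand-$0$ vertices, which is exactly what breaks bidimensionality. First I would observe that if $k$ vertices form a vector dominating set, then $d^*$ can effectively be capped at $\min\{d^*,k\}$: any demand exceeding $k$ can never be met by a solution of size $k$, so we may replace each $d(v)$ by $\min\{d(v),k\}$ without changing the answer, and rerun the algorithm of Theorem~\ref{algo1:th} with this reduced $d^*$. This immediately accounts for the $\min\{d^*,k\}$ appearing in the running time.

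The heart of the proof is establishing the branchwidth bound $b^* = \min\{12\sqrt{k+z}+9,\, 20\sqrt{k}+17\}$, where $z$ is the number of demand-$0$ vertices. The first term $12\sqrt{k+z}+9$ I would derive as follows: a demand-$0$ vertex imposes no constraint and hence need not be dominated, so it behaves like a vertex that is either trivially satisfied or placed in the solution. By a reduction that folds the $z$ zero-demand vertices together with the $k$ solution vertices, one obtains a set of at most $k+z$ "relevant" vertices that forms an ordinary dominating set in an auxiliary graph, whence the classical inequality $bw(G) \le 12\sqrt{k+z}+9$ applies. For the second term $20\sqrt{k}+17$, which is the bound one wants when $z$ is large, I would instead preprocess the graph to remove or contract irrelevant structure around the demand-$0$ vertices so that the effective solution size controlling the branchwidth is $O(\sqrt{k})$; the constant $20$ (larger than $12$) reflects the overhead of this reduction. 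Taking the minimum of the two bounds yields $b^*$, and one then checks in $O(n^4)$ time via Lemma~\ref{branch:lem} whether $bw(G) \le b^*$ and, if so, constructs an optimal branch decomposition.

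With the branch decomposition in hand, I would run the algorithm of Theorem~\ref{algo1:th} with width $b = b^*$ and capped demand $d^* = \min\{d^*,k\}$, giving running time $O((\min\{d^*,k\}+2)^{b^*}\{(\min\{d^*,k\}+1)^2+1\}^{b^*/2}\,m)$; since $m = O(n)$ for planar graphs, this matches the claimed bound. The final dichotomy is: if the branchwidth test fails, i.e.\ $bw(G) > b^*$, then by the contrapositive of the branchwidth inequality $G$ cannot have a vector dominating set of size at most $k$, so we answer \emph{no}; otherwise we run the dynamic program and read off the exact answer.

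The main obstacle I expect is proving the branchwidth inequality itself, specifically the two-pronged bound accounting for demand-$0$ vertices. The classical bidimensionality argument fails precisely because a vector dominating set need not dominate the $z$ zero-demand vertices, so the usual ``large grid minor forces a large solution'' argument does not go through directly. The delicate part is the reduction establishing that the \emph{effective} solution size---after accounting for irrelevant vertices---still controls the branchwidth; getting the explicit constants ($12$ versus $20$, and the additive $9$ versus $17$) correct requires carefully tracking how the grid-minor and separator arguments degrade under the vertex-folding or contraction reduction. Once this combinatorial inequality is in place, the rest of the proof is a routine assembly of the preprocessing, branchwidth computation, and the dynamic program of Theorem~\ref{algo1:th}.
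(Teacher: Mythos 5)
Your overall architecture (a branchwidth bound for YES-instances, plus the dynamic program of Theorem~\ref{algo1:th}, answering ``no'' when the branchwidth test fails) is the same as the paper's, but two of your steps have genuine gaps. The first is the demand-capping step: replacing $d(v)$ by $\min\{d(v),k\}$ is \emph{not} answer-preserving. Consider the planar (tree) instance with a center $v$ adjacent to $u_1,\ldots,u_k$, a pendant $w_i$ attached to each $u_i$, and demands $d(v)=k+1$, $d(u_i)=0$, $d(w_i)=1$. In the original instance $v$ has only $k$ neighbors, so it can never be dominated and is forced into any solution, and each pendant then costs one further vertex; the minimum size is $k+1$, so the answer for budget $k$ is NO. After capping, $d(v)=k$ and $S=\{u_1,\ldots,u_k\}$ is a vector dominating set of size $k$, so the answer becomes YES. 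The point is that a demand of exactly $k$ \emph{can} be met by a size-$k$ solution, so capping at $k$ destroys the forcing argument you invoked. The paper instead uses an iterative kernelization: while $d^*>k$, every vertex of maximum demand is forced into any solution of size at most $k$, so either there are more than $k$ such vertices (answer NO) or they are deleted, $k$ is decreased by their number, and each remaining demand is decreased by the number of deleted neighbors; this terminates with an equivalent instance satisfying $d^*\le k$, which is where $\min\{d^*,k\}$ in the stated running time really comes from.

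The second gap is the branchwidth inequality itself (Lemma~\ref{lem:bidim} in the paper), which you explicitly leave as ``the main obstacle'': your plan of re-tracking grid-minor and separator constants under a folding/contraction reduction is both unproven and not how the constants arise. The paper's argument is short and needs no new bidimensionality analysis. For the bound $12\sqrt{k+z}+9$, observe that for any vector dominating set $D$ of $G$ with $|D|\le k$, the set $D\cup V_0$ is an ordinary dominating set of $G$ itself (every vertex outside $D\cup V_0$ has positive demand and hence a neighbor in $D$), so the inequality of \cite{fomin2006dominating} applies with parameter $k+z$; no auxiliary graph is involved. For the bound $20\sqrt{k}+17$, first delete the irrelevant vertices, i.e., those $v\in V_0$ with $N_G(v)\subseteq V_0$; afterwards every demand-zero vertex has a neighbor of positive demand, and then any solution $D$ with $|D|\le k$ is a $(k,2)$-center of $G$: vertices of positive demand lie in $D$ or adjacent to it, and demand-zero vertices are adjacent to such vertices. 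The known planar $(k,r)$-center bound $bw(G)\le 4(2r+1)\sqrt{k}+8r+1$ of \cite[Theorem 3.2]{demaine2005fixed} with $r=2$ then gives exactly $20\sqrt{k}+17$. Without this $(k,2)$-center observation (or an equivalent substitute) your second term of $b^*$ is unsupported, and with it the validity of answering NO when $bw(G)>b^*$.
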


\noindent
This time complexity is roughly $O(n^4+2^{O(\sqrt{k}\log k)}n)$, which 
is subexponential with respect to $k$; this improves the running time of
the previous fixed-parameter algorithm.  

Let $V_0=\{v \in V \mid d(v)=0\}$ and $z=|V_0|$. 
In \cite[Lemma 2.2]{fomin2006dominating},
it was shown that if a planar graph $G'$ has an ordinary dominating set 
(i.e., a (1,1,\ldots,1)-vector dominating set) with cardinality
at most $k$, then $bw(G') \leq 12\sqrt{k}+9$. This bounds is based on
the {\em bidimensionality}~\cite{demaine2008bidimensionality}, 
and was used to design the subexponential fixed-parameter algorithm 
with respect to $k$ for the ordinary dominating set problem. In the case
of our domination problems, however, it is difficult to say that they
have the bidimensionality, due to the existence of $V_0$ vertices. Instead, 
we give a similar bound on the branchwidth not w.r.t $k$ but w.r.t $k+z$
as follows: For any (total, multiple) $d$-vector dominating set $D$ of
$G$ ($|D|\le k$), $D \cup V_0$ is an ordinary dominating set of $G$, and this yields
$bw(G) \leq 12\sqrt{k+z}+9$.  

Actually, it is also possible to exclude $z$ from the parameters, though the
coefficient of the exponent becomes larger. To this end, we use the
notion of $(k,2)$-center. Recall that a $(k,r)$-{\em center} of $G'$ is
a set $W$ of vertices of $G'$ with size $k$ such that any vertex in $G'$
is within distance $r$ from a vertex of $W$. 
For a $(k,r)$-center, a similar bound on
the branchwidth is known: if a planar graph $G'$ has a $(k,r)$-center, 
then $bw(G') \leq 4(2r+1)\sqrt{k}+8r+1$ (\cite[Theorem
3.2]{demaine2005fixed}). Here, we use this bound. We can assume that for
$v\in V_0$, $N_G(v) \not\subseteq V_0$ holds, because 
$v\in V_0$
satisfying $N_G(v) \subseteq V_0$ is never selected as a member of any
optimal solution; it is {\em irrelevant}, 
and we can remove it. 
That is, every vertex in $V_0$ has at least one neighbor from
$V-V_0$. Then, for any (total, multiple) $d$-vector dominating set $D$
of $G$ ($|D|\le k$), $D$ is a $(k,2)$-center of $G$. This is because any
vertex in $V-V_0$  is adjacent to a vertex in $D$ and any vertex in
$V_0$ is adjacent to a  vertex in $V-V_0$. Thus, we have $bw(G) \leq
20\sqrt{k}+17$.   

In summary, we have the following lemma.

\begin{lemma}\label{lem:bidim}
Assume that  $G$ is a planar graph without irrelevant vertices,
i.e., $N_G(v) \not\subseteq V_0$ holds for each $v \in V_0$.
 Then, if
 $G$ has a (total, multiple) vector dominating set with
 cardinality at most $k$,
then  we have $bw(G) \leq
\min\{12\sqrt{k+z}+9,20\sqrt{k}+17\}$. \qed 
\end{lemma}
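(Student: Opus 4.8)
The plan is to prove the two terms of the minimum separately and then take the smaller one. Both arguments reduce the domination constraint to a known bidimensionality-type inequality: the dominating-set bound $bw(G')\le 12\sqrt{k'}+9$ from \cite[Lemma~2.2]{fomin2006dominating} for the first term, and the $(k,r)$-center bound $bw(G')\le 4(2r+1)\sqrt{k}+8r+1$ from \cite[Theorem~3.2]{demaine2005fixed}, specialized to $r=2$, for the second term. So the entire task is to exhibit, from a small (total, multiple) vector dominating set $D$ with $|D|\le k$, either a small ordinary dominating set or a small $(k,2)$-center of $G$.

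For the first term, I would show that $D\cup V_0$ is an ordinary dominating set of $G$, so that applying \cite[Lemma~2.2]{fomin2006dominating} with $|D\cup V_0|\le k+z$ yields $bw(G)\le 12\sqrt{k+z}+9$. The key observation is that any vertex $v\notin D\cup V_0$ satisfies $d(v)\ge 1$ (since $v\notin V_0$) and $v\notin D$, so the defining inequality of each variant forces $v$ to have a neighbor in $D$: for vector and total vector domination this is $|N_G(v)\cap D|\ge d(v)\ge 1$ directly, and for multiple domination $|N_G[v]\cap D|\ge d(v)\ge 1$ together with $v\notin D$ gives $|N_G(v)\cap D|\ge 1$. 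Hence every vertex outside $D\cup V_0$ is dominated.

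For the second term, I would use the no-irrelevant-vertex hypothesis to show that $D$ itself is a $(k,2)$-center. First, every $v\in V-V_0$ is within distance $1$ of $D$: either $v\in D$, or $v\notin D$ and then the argument above supplies a neighbor of $v$ in $D$. Second, every $v\in V_0$ has, by the assumption $N_G(v)\not\subseteq V_0$, a neighbor $u\in V-V_0$, which is within distance $1$ of $D$, placing $v$ within distance $2$ of $D$. Thus $D$ is a $(k,2)$-center, and \cite[Theorem~3.2]{demaine2005fixed} with $r=2$ gives $bw(G)\le 4\cdot 5\sqrt{k}+17=20\sqrt{k}+17$. Combining the two bounds yields the claimed minimum.

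The only point requiring care---though it is not deep---is verifying uniformly across all three variants that a vertex of positive demand lying outside $D$ must have at least one neighbor inside $D$; the multiple domination case is the single place where a closed neighborhood appears, and one must note that $v\notin D$ converts the closed-neighborhood count into an open one. Everything else is bookkeeping: the cardinality bound $|D\cup V_0|\le k+z$ for the first term and the distance-$2$ reachability through $V-V_0$ for the second.
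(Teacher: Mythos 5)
Your proposal is correct and follows essentially the same route as the paper: bound the first term by observing that $D\cup V_0$ is an ordinary dominating set of size at most $k+z$ and invoking the bound of Fomin--Thilikos, and bound the second term by using the no-irrelevant-vertex assumption to show $D$ is a $(k,2)$-center and invoking the bound of Demaine et al.\ with $r=2$. Your write-up is in fact slightly more careful than the paper's, since you verify the ``neighbor in $D$'' property uniformly across all three variants (including the closed-neighborhood subtlety for multiple domination) and handle vertices of $D$ itself via distance $0$ rather than adjacency.
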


\noindent
Combining this lemma with  the algorithm in
Subsection~\ref{algo1:subsec}, we can check whether a given graph
has  a  vector
dominating
set with cardinality at most $k$ according to the following steps 1 and 2:

\vspace*{0.2cm}

\noindent
Step 1: Let $b^*=\min\{12\sqrt{k+z}+9,20\sqrt{k}+17\}$. Check whether
the branchwidth of $G$ is at most $b^*$. 
If so, then  go to Step 2, and otherwise halt after outputting ^^ NO'.

\vspace*{0.2cm}

\noindent
Step 2: Construct a branch decomposition with width at most $b^*$,
and apply the dynamic programming algorithm in
Subsection~\ref{algo1:subsec}
to find a minimum vector dominating set for $G$.

\vspace*{0.2cm}

By  Lemma~\ref{branch:lem},  Theorem~\ref{algo1:th}, and
the fact that  
 any planar graph $G'$ satisfies  $|E(G')|=O(|V(G')|)$,
it follows that the running time of this procedure is
 $O(n^4+(d^*+2)^{b^*}\{(d^*+1)^2+1\}^{b^*/2}n)$.
Hence, in the case of $d^* \leq k$,  Theorem~\ref{fpt1:th} has been proved.

The  case of $d^* > k$ can be reduced to the case of $d^* \leq k$ by
 the following standard  kernelization method, 
which proves Theorem~\ref{fpt1:th}.
Assume that  $d^* > k$.
Let $V_{\max}(d)$ be the set of vertices $v$ with $d(v)=d^*$.
For the feasibility, we need to select each  vertex
$v \in V_{\max}(d)$  as a member in a vector dominating set.
Hence, if  $|V_{\max}(d)| > k$,
 then it turns out that $G$ has no 
 vector dominating set with cardinality
at most $k$.
Assume that  $|V_{\max}(d)| \leq  k$.
Then, it is not difficult to see that 
we can reduce an instance
$I(G,d,k)$  with $G$,  $d$, and $k$ to an instance
$I(G',d',k')$ 
such that $G'=G-V_{\max}(d)$ (i.e., $G'$ is the graph obtained from $G$ by deleting $V_{\max}(d)$),
$d'(v)=\max\{0,d(v)-|N_G(v)\cap V_{\max}(d)|\}$ for all vertices $v \in
V(G')$,
and $k'=\max\{0,k-|V_{\max}(d)|\}$.
Based on this observation, we can reduce $I(G,d,k)$
to an instance $I(G'',d'',k'')$ with $\max\{d''(v)\mid v \in V(G'')\}
\leq
k'' \leq k$ 
or output ^^ YES' or ^^ NO'
in the  following manner:

\vspace*{0.2cm}

\noindent
(a) After setting $G':=G$, $d':=d$, and $k':=k$, repeat the procedures (b1)--(b3)
while $k' < {d'}^* (=\max\{d'(v)\mid v \in V(G')\})$.

\vspace*{0.2cm}

\noindent
(b1) If $k' <  |V_{\max}(d')|$, then halt after outputting ^^ NO.'

\noindent
(b2) If $k' \geq |V_{\max}(d')|$ and  $V(G')=V_{\max}(d')$, 
then halt after outputting ^^ YES.' 

\noindent
(b3) Otherwise after setting
$G'':=G'-V_{\max}(d')$,
$d''(v):=\max\{0,d'(v)-|N_{G'}(v)\cap V_{\max}(d')|\}$ for each $v \in
V(G'')$,
and 
$k'':=\max\{0,k'-|V_{\max}(d')|\}$, redefine $G''$, $d''$, and $k''$
as $G'$, $d'$, and $k'$, respectively.

\vspace*{0.2cm}

Next, we consider the total vector domination problem
and the multiple domination problem.
For these problems, since all vertices $v \in V$ need to be dominated by
$d(v)$ vertices, the condition that $d^* \leq k$ is necessary for the feasibility.
Similarly, we have the following theorem by Theorems~\ref{algo2:th} and
\ref{algo3:th}. 

\begin{theorem}\label{fpt2:th}
Assume that a given graph $G$ is planar, and let
 $b^*=\min\{12\sqrt{k+z}+9,20\sqrt{k}+17\}$. \\
$(i)$ We can check in 
 $O(n^4+2^{3b^*/2}(\min\{d^*,k\}+2)^{2b^*}n)$
time 
whether $G$ has a total vector dominating set with cardinality at most $k$
or not. \\
$(ii)$
We can check in 
 $O(n^4+2^{3b^*/2}(\min\{d^*,k\}+2)^{2b^*}n)$
time 
whether $G$ has a multiple  dominating set with cardinality at most $k$
or not.
\qed
\end{theorem}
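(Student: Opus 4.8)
The plan is to mirror the proof of Theorem~\ref{fpt1:th} almost verbatim, substituting the dynamic-programming running times of Theorems~\ref{algo2:th} and~\ref{algo3:th} for that of Theorem~\ref{algo1:th}, and to exploit the feasibility constraint peculiar to the (total, multiple) setting so as to dispense with the kernelization loop. The first observation I would record is that, unlike plain vector domination, here \emph{every} vertex $v\in V$ must be dominated by $d(v)$ vertices of $S$; since $|N_G(v)\cap S|\le|S|$ (and likewise $|N_G[v]\cap S|\le|S|$ for the multiple case), any vertex with $d(v)>k$ immediately certifies infeasibility. Hence I would begin by testing whether $d^*\le k$: if $d^*>k$, output NO and halt. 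This single test replaces the iterative reduction of Theorem~\ref{fpt1:th}, and it guarantees $\min\{d^*,k\}=d^*$ in the surviving case, which is why $\min\{d^*,k\}$ rather than $d^*$ appears in the stated bound.

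Next I would remove irrelevant vertices so that Lemma~\ref{lem:bidim} becomes applicable. A vertex $v\in V_0$ with $N_G(v)\subseteq V_0$ contributes nothing to any solution---placing it in $S$ can only help dominate its neighbors, all of which have demand $0$---so deleting all such vertices changes neither feasibility nor the minimum solution size, exactly as in the argument preceding Lemma~\ref{lem:bidim}. After this clean-up the graph satisfies the hypothesis of Lemma~\ref{lem:bidim}, so if $G$ admits a (total, multiple) vector dominating set of size at most $k$, then $bw(G)\le b^*=\min\{12\sqrt{k+z}+9,\,20\sqrt{k}+17\}$.

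With these two reductions in place the algorithm is the same two-step procedure used for Theorem~\ref{fpt1:th}. In Step~1, I use the planar branchwidth routine of Lemma~\ref{branch:lem} to test in $O(n^2)$ time whether $bw(G)\le b^*$; if not, the contrapositive of Lemma~\ref{lem:bidim} lets me output NO. Otherwise, in Step~2, I construct a branch decomposition of width $bw(G)\le b^*$ in $O(n^4)$ time (again by Lemma~\ref{branch:lem}) and run the dynamic program of Theorem~\ref{algo2:th} for part~$(i)$ (respectively Theorem~\ref{algo3:th} for part~$(ii)$). Substituting $b=b^*$, $d^*=\min\{d^*,k\}$, and $m=O(n)$ (every planar graph has $|E|=O(|V|)$) into the $O(2^{3b/2}(d^*+1)^{2b}m)$ bound of those theorems yields the claimed $O(n^4+2^{3b^*/2}(\min\{d^*,k\}+2)^{2b^*}n)$ running time for both parts.

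I expect the only genuinely delicate point to be justifying the irrelevant-vertex deletion in the total and multiple settings, where membership of a vertex in $S$ no longer exempts it from its own demand; the argument is nonetheless immediate because a demand-$0$ vertex whose neighbors all have demand $0$ can never assist in meeting any positive demand. Everything else is a routine transcription: the feasibility test, the branchwidth bound of Lemma~\ref{lem:bidim}, and the composition of running times from Lemma~\ref{branch:lem} and Theorems~\ref{algo2:th}/\ref{algo3:th}.
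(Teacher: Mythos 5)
Your proposal is correct and takes essentially the same route as the paper: the paper's own (very terse) proof likewise observes that $d^*\le k$ is necessary for feasibility in the total/multiple setting---which is exactly why the kernelization loop of Theorem~\ref{fpt1:th} can be dropped---and then reuses the two-step procedure built on Lemma~\ref{lem:bidim}, Lemma~\ref{branch:lem}, and the dynamic programs of Theorems~\ref{algo2:th} and~\ref{algo3:th}. Your explicit justification of the irrelevant-vertex deletion for the total and multiple variants fills in a detail the paper leaves implicit (Lemma~\ref{lem:bidim} is stated for all three problems), but it does not change the argument.
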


Before concluding this section, we mention that the above result 
can be extended to apex-minor-free graphs, a superclass of planar
graphs. For apex-minor-free graphs, the following lemma is known. 

\begin{lemma}(\cite[Lemma 2]{fomin2011subexponential})
Let $G$ be an apex-minor-free graph. If $G$ has a $(k,r)$-center,
then the treewidth of $G$ is $O(r\sqrt{k})$. 
\end{lemma}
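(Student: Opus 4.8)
The plan is to prove this via \emph{bidimensionality theory} \cite{demaine2008bidimensionality}, the same machinery behind the planar bound $bw(G')\le 4(2r+1)\sqrt{k}+8r+1$ for $(k,r)$-centers from \cite{demaine2005fixed} that was quoted just above. The point is that the minimum size $\gamma_r(G)$ of a $(k,r)$-center (the \emph{distance-$r$ domination number}) is a \emph{contraction-bidimensional} parameter. I would argue by contraposition: assuming $tw(G)$ is much larger than $r\sqrt{k}$, I will exhibit a large grid-like contraction on which no $(k,r)$-center of size $k$ can exist.

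First I would invoke the excluded-grid theorem in its \emph{contraction} form for apex-minor-free graphs: there is a constant $c_H$, depending only on the excluded apex graph $H$, such that whenever $tw(G)\ge c_H t$, the graph $G$ has a partially triangulated $(t\times t)$-grid $\Gamma$ as a contraction (i.e.\ $\Gamma$ is obtained from $G$ by contracting edges). This linear relation between treewidth and the largest grid contraction is precisely the structural feature that singles out apex-minor-free graphs and powers the whole theory.

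Second I would show that $\gamma_r$ is contraction-monotone and large on $\Gamma$. Contracting an edge cannot increase distances, so the image of any $(k,r)$-center of $G$ is again a $(k,r)$-center of the contraction; hence $\gamma_r(\Gamma)\le \gamma_r(G)\le k$. For the lower bound, note that in a partially triangulated $(t\times t)$-grid a single step changes each grid coordinate by at most one even along a diagonal, so a ball of radius $r$ lies inside an $O(r)\times O(r)$ block and covers only $O(r^2)$ vertices; covering all $t^2$ vertices therefore forces $\gamma_r(\Gamma)\ge \Omega((t/(2r+1))^2)$. Combining the two, $k\ge \Omega((t/(2r+1))^2)$, so $t=O(r\sqrt{k})$ and thus $tw(G)< c_H t=O(r\sqrt{k})$, as claimed.

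I expect the main obstacle to be the contraction grid theorem itself, which is the deep ingredient; the combinatorial lower bound on $\gamma_r$ is routine except for one point that must be handled with care, namely that the triangulation diagonals shorten distances and could in principle shrink the balls' footprint. Checking that a radius-$r$ ball is still confined to an $O(r)\times O(r)$ block despite the diagonals, and absorbing the $H$-dependent constant $c_H$ (explicit only in the planar case) into the final $O(r\sqrt{k})$, are the points where I would spend the most effort.
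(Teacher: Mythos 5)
Your proposal cannot be checked against an in-paper argument, because the paper never proves this lemma: it is imported verbatim, with citation, from \cite{fomin2011subexponential} and used as a black box to derive the corollary for apex-minor-free graphs. That said, your reconstruction is correct and follows essentially the same route as the cited source — contraction-form grid theorem for apex-minor-free graphs, contraction-monotonicity of the distance-$r$ domination number, and the $\Omega\bigl(t^2/(2r+1)^2\bigr)$ lower bound on partially triangulated $t\times t$ grids (with the diagonals correctly handled by the observation that every edge changes each coordinate by at most one) — which together give $tw(G)=O(r\sqrt{k})$.
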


\noindent 
\nop{
Also, from the constant-factor approximation algorithm
of
 \cite{demaine2005decomposition} for computing a tree-decomposition
of minimum width in 
$H$-minor-free graphs, which is a superclass of apex-minor-free graphs, and the property that 
a tree-decomposition width $t$ can be converted to
a branch-decomposition with at most $t+1$ in polynomial time \cite{robertson1991graph},
we have the following lemma.

\begin{lemma}
Let $G$ be an $H$-minor-free graph.
Then, a branch-decomposition of $G$
with width $O(bw(G))$ can be computed in polynomial time 
 for any fixed $H$.
\end{lemma}
}

From this lemma, the linear relation of treewidth and branchwidth, 
and the $2^{O(bw(G))}n^2$ -time algorithm  
for computing a branch decomposition with width $O(bw(G))$ (mentioned
after Theorem~\ref{algo1:th}), 
we obtain the following corollary. 

\begin{corollary}
Let $G$ be an apex-minor-free graph. We can check in $2^{O(\sqrt{k}\log
 k)} n^{O(1)}$ time whether $G$ has a (total, multiple) vector
 dominating set with cardinality at most $k$ or not. 
\end{corollary}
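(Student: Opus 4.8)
The plan is to transport the planar argument of Lemma~\ref{lem:bidim} to the apex-minor-free setting, using the preceding lemma in place of the planar branchwidth bounds, and then to invoke the dynamic programs of Theorems~\ref{algo1:th}, \ref{algo2:th}, and \ref{algo3:th}. First I would preprocess $G$ by deleting every \emph{irrelevant} vertex, that is, every $v\in V_0$ with $N_G(v)\subseteq V_0$; as before such a vertex lies in no optimal solution, and since the apex-minor-free class is minor-closed and hence closed under vertex deletion, the reduced graph is still apex-minor-free and has the same answer. After this step every vertex of $V_0$ has a neighbor in $V-V_0$. Now suppose $G$ has a (total, multiple) vector dominating set $D$ with $|D|\le k$. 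Exactly as in the proof of Lemma~\ref{lem:bidim}, $D$ is then a $(k,2)$-center: each vertex of $V-V_0$ is adjacent to a vertex of $D$, and each vertex of $V_0$ reaches $D$ within two steps through a neighbor in $V-V_0$. This argument is purely metric and uses no planarity, so it carries over unchanged. Applying the preceding lemma with $r=2$ gives $tw(G)=O(\sqrt k)$, and the linear relation $bw(G)\le tw(G)+1$ then yields $bw(G)=O(\sqrt k)$.

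Since the polynomial-time planar routine of Lemma~\ref{branch:lem} is not available here and exact branchwidth is NP-hard, I would next run the general-graph $O(2^{b\lg 27}n^2)$ procedure mentioned after Theorem~\ref{algo1:th} with threshold $b=c\sqrt k$, where $c$ is the explicit constant hidden in the $O(\sqrt k)$ bound above. If it reports $bw(G)\ge b$, then by the contrapositive of the previous paragraph $G$ has no admissible solution and I output `NO'; otherwise it returns a branch decomposition of width at most $3b=O(\sqrt k)$ in time $2^{O(\sqrt k)}n^2$, well within the target.

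Finally I would feed this decomposition to the appropriate dynamic program. For total and multiple domination, feasibility of a size-$k$ solution already forces $d^*\le k$; for vector domination I would first apply the kernelization from the proof of Theorem~\ref{fpt1:th} (deleting $V_{\max}$ and updating demands) to reduce to an equivalent instance with $d^*\le k$, which again preserves apex-minor-freeness. Using $m=O(n)$ for $H$-minor-free graphs, Theorems~\ref{algo1:th}--\ref{algo3:th} then run in $(d^*+O(1))^{O(b)}m=(k+O(1))^{O(\sqrt k)}n=2^{O(\sqrt k\log k)}n$ time on a decomposition of width $O(\sqrt k)$. Adding the $2^{O(\sqrt k)}n^2$ cost of building the decomposition gives the claimed $2^{O(\sqrt k\log k)}n^{O(1)}$ bound.

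The main obstacle is that the preceding lemma bounds the treewidth only existentially and supplies no construction; bridging from this bound to an explicit branch decomposition is what forces the constant-factor $2^{O(bw(G))}n^2$ routine together with the contrapositive-based `NO' output, and one must check that setting its threshold to the explicit $O(\sqrt k)$ constant is legitimate. The only other points to verify are routine: that irrelevant-vertex deletion and kernelization both stay inside the apex-minor-free class and preserve the answer, and that the $(k,2)$-center argument indeed needs only the absence of irrelevant vertices.
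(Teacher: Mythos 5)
Your proposal is correct and follows essentially the same route as the paper, whose proof of this corollary is a single sentence combining exactly the ingredients you use: the $(k,2)$-center treewidth bound for apex-minor-free graphs, the linear relation between treewidth and branchwidth, the $2^{O(bw(G))}n^2$ decomposition routine mentioned after Theorem~\ref{algo1:th}, and the dynamic programs of Theorems~\ref{algo1:th}--\ref{algo3:th}. The details you supply --- irrelevant-vertex deletion before the $(k,2)$-center argument, the threshold-and-output-`NO' use of the decomposition algorithm, and the kernelization to force $d^*\le k$ for vector domination --- are precisely the steps the paper leaves implicit (they are borrowed from Lemma~\ref{lem:bidim} and Theorem~\ref{fpt1:th}), and you have verified them correctly.
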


\nop{
\begin{remark}
It is natural to try to design a subexponential fixed-parameter algorithm 
for the domination problems of planar graphs with respect to only
 solution size $k$.  
One of general techniques to design a subexponential fixed-parameter algorithm 
for planar graphs or wider classes of graphs is to utilize the
bidimensionality~\cite{demaine2008bidimensionality}. 
The bidimensionality is the idea behind \cite[Lemma
 2.2]{fomin2006dominating}, which is used to prove Lemma 
 \ref{lem:bidim}. In fact, the $O(2^{15.13\sqrt{k}}+n^3)$-time algorithm 
 for planar graphs is designed based on \cite[Lemma
 2.2]{fomin2006dominating}, i.e., by the bidimensionality of the ordinary 
dominating set. 
In the cases of our domination problems, however, 
it seems to be difficult to apply the bidimensionality technique
with removing parameter $z$. For example, the transformation
 discussed in \cite{betzler2012bounded} can reduce the domination problems 
with $0$ demands (i.e., $z>0$) to positive cases (i.e., $z=0$), but the
 transformation does not preserve the bidimensionality. In the sense, we
might need a more elaborate technique to obtain  a subexponential
 fixed-parameter algorithm for the domination problems of planar graphs
 with respect to only solution size $k$.  
\end{remark}
}

\bibliographystyle{abbrv}
\bibliography{domination}

\begin{thebibliography}{10}

\bibitem{alon2006algorithmic}
N.~Alon, D.~Moshkovitz, and S.~Safra.
\newblock Algorithmic construction of sets for $k$-restrictions.
\newblock {\em ACM Transactions on Algorithms {\rm (}TALG{\rm )}},
  2(2):153--177, 2006.

\bibitem{amini2011implicit}
O.~Amini, F.~V. Fomin, and S.~Saurabh.
\newblock Implicit branching and parameterized partial cover problems.
\newblock {\em Journal of Computer and System Sciences}, 77(6):1159--1171,
  2011.

\bibitem{betzler2012bounded}
N.~Betzler, R.~Bredereck, R.~Niedermeier, and J.~Uhlmann.
\newblock On bounded-degree vertex deletion parameterized by treewidth.
\newblock {\em Discrete Applied Mathematics}, 160(1):53--60, 2012.

\bibitem{bodlaender1997constructive}
H.~L. Bodlaender and D.~M. Thilikos.
\newblock Constructive linear time algorithms for branchwidth.
\newblock In {\em Automata, Languages and Programming}, volume 1256 of {\em
  Lecture Notes in Computer Science}, pages 627--637. Springer, 1997.

\bibitem{cai2003subexponential}
L.~Cai and D.~Juedes.
\newblock On the existence of subexponential parameterized algorithms.
\newblock {\em Journal of Computer and System Sciences}, 67(4):789--807, 2003.

\bibitem{chapelle2010parameterized}
M.~Chapelle.
\newblock Parameterized complexity of generalized domination problems on
  bounded tree-width graphs.
\newblock {\em arXiv preprint arXiv:1004.2642}, 2010.

\bibitem{cicalese2013latency2}
F.~Cicalese, G.~Cordasco, L.~Gargano, M.~Milani{\v{c}}, and U.~Vaccaro.
\newblock Latency-bounded target set selection in social networks.
\newblock In {\em The Nature of Computation. Logic, Algorithms, Applications},
  pages 65--77. Springer, 2013.

\bibitem{FCT2011}
F.~Cicalese, M.~Milanic, and U.~Vaccaro.
\newblock Hardness, approximability, and exact algorithms for vector domination
  and total vector domination in graphs.
\newblock In {\em Fundamentals of Computation Theory}, volume 6914 of {\em
  Lecture Notes in Computer Science}, pages 288--297. Springer Berlin
  Heidelberg, 2011.

\bibitem{DAM2013}
F.~Cicalese, M.~Milanic, and U.~Vaccaro.
\newblock On the approximability and exact algorithms for vector domination and
  related problems in graphs.
\newblock {\em Discrete Applied Mathematics}, 161(6):750 -- 767, 2013.

\bibitem{courcelle1990monadic}
B.~Courcelle.
\newblock The monadic second-order logic of graphs. {I}. recognizable sets of
  finite graphs.
\newblock {\em Information and computation}, 85(1):12--75, 1990.

\bibitem{courcelle2000upper}
B.~Courcelle and S.~Olariu.
\newblock Upper bounds to the clique width of graphs.
\newblock {\em Discrete Applied Mathematics}, 101(1):77--114, 2000.

\bibitem{demaine2005fixed}
E.~D. Demaine, F.~V. Fomin, M.~Hajiaghayi, and D.~M. Thilikos.
\newblock Fixed-parameter algorithms for ($k$, $r$)-center in planar graphs and
  map graphs.
\newblock {\em ACM Transactions on Algorithms {\rm (}TALG{\rm )}}, 1(1):33--47,
  2005.

\bibitem{demaine2005subexponential}
E.~D. Demaine, F.~V. Fomin, M.~Hajiaghayi, and D.~M. Thilikos.
\newblock Subexponential parameterized algorithms on bounded-genus graphs and
  {$H$}-minor-free graphs.
\newblock {\em Journal of the ACM (JACM)}, 52(6):866--893, 2005.

\bibitem{demaine2008bidimensionality}
E.~D. Demaine and M.~Hajiaghayi.
\newblock The bidimensionality theory and its algorithmic applications.
\newblock {\em The Computer Journal}, 51(3):292--302, 2008.

\bibitem{dobson1982worst}
G.~Dobson.
\newblock Worst-case analysis of greedy heuristics for integer programming with
  nonnegative data.
\newblock {\em Mathematics of Operations Research}, 7(4):515--531, 1982.

\bibitem{downey1992fixed}
R.~G. Downey and M.~R. Fellows.
\newblock {\em Fixed-parameter tractability and completeness}.
\newblock Cornell University, Mathematical Sciences Institute, 1992.

\bibitem{fomin2011subexponential}
F.~V. Fomin, D.~Lokshtanov, V.~Raman, and S.~Saurabh.
\newblock Subexponential algorithms for partial cover problems.
\newblock {\em Information Processing Letters}, 111(16):814--818, 2011.

\bibitem{fomin2006dominating}
F.~V. Fomin and D.~M. Thilikos.
\newblock Dominating sets in planar graphs: branch-width and exponential
  speed-up.
\newblock {\em SIAM Journal on Computing}, 36(2):281--309, 2006.

\bibitem{HPV99}
J.~Harant, A.~Pruchnewski, and M.~Voigt.
\newblock On dominating sets and independent sets of graphs.
\newblock {\em Combinatorics, Probability and Computing}, 8:547--553, 10 1999.

\bibitem{harary2000double}
F.~Harary and T.~W. Haynes.
\newblock Double domination in graphs.
\newblock {\em Ars Combinatoria}, 55:201--214, 2000.

\bibitem{haynes1998domination}
T.~W. Haynes, S.~T. Hedetniemi, and P.~J. Slater.
\newblock {\em Domination in graphs: advanced topics}, volume~40.
\newblock Marcel Dekker New York, 1998.

\bibitem{haynes1998fundamentals}
T.~W. Haynes, S.~T. Hedetniemi, and P.~J. Slater.
\newblock {\em Fundamentals of domination in graphs}.
\newblock Marcel Dekker, 1998.

\bibitem{impagliazzo2001ETH}
R.~Impagliazzo, R.~Paturi, and F.~Zane.
\newblock Which problems have strongly exponential complexity?
\newblock {\em Journal of Computer and System Sciences}, 63(4):512--530, 2001.

\bibitem{lund1994hardness}
C.~Lund and M.~Yannakakis.
\newblock On the hardness of approximating minimization problems.
\newblock {\em Journal of the ACM {\rm (}JACM{\rm )}}, 41(5):960--981, 1994.

\bibitem{raman2008parameterized}
V.~Raman, S.~Saurabh, and S.~Srihari.
\newblock Parameterized algorithms for generalized domination.
\newblock In {\em Combinatorial Optimization and Applications}, pages 116--126.
  Springer, 2008.

\bibitem{robertson1991graph}
N.~Robertson and P.~D. Seymour.
\newblock Graph minors. {X}. obstructions to tree-decomposition.
\newblock {\em Journal of Combinatorial Theory, Series B}, 52(2):153--190,
  1991.

\bibitem{robertson1995disjointpath}
N.~Robertson and P.~D. Seymour.
\newblock Graph minors. {XIII}. the disjoint paths problem.
\newblock {\em Journal of Combinatorial Theory, Series B}, 63(1):65--110, 1995.

\bibitem{seymour1994call}
P.~D. Seymour and R.~Thomas.
\newblock Call routing and the ratcatcher.
\newblock {\em Combinatorica}, 14(2):217--241, 1994.

\end{thebibliography}
%
%
%
%
%
%

\end{document}